\title{\huge Spatial Multiplexing in Near Field MIMO Channels with Reconfigurable Intelligent Surfaces}
\newcommand{\kt}{N}
\newcommand{\kr}{M}
\newcommand{\kris}{K}
\newcommand{\Hf}{{\HH}_{\text{F}}}
\newcommand{\Hb}{{\HH}_{\text{B}}}
\newcommand{\Uf}{{\mathbf{U}}_{\text{F}}}
\newcommand{\Ub}{{\mathbf{U}}_{\text{B}}}
\newcommand{\Vf}{{\mathbf{V}}_{\text{F}}}
\newcommand{\Vb}{{\mathbf{V}}_{\text{B}}}
\newcommand{\Sf}{{\mathbf{\Sigma}}_{\text{F}}}
\newcommand{\Sb}{{\mathbf{\Sigma}}_{\text{B}}}
\newcommand{\nondiagopt}[1]{NDRO}
\newcommand{\focusing}[1]{FOC}
\newcommand{\optimization}[1]{DROP}
\newcommand{\Ptx}{\mathbf{P}_{{\text{T}}}}
\newcommand{\Prx}{\mathbf{P}_{{\text{R}}}}
\newcommand{\Prr}{\mathbf{P}_{{\text{RF}}}}
\newcommand{\Prt}{\mathbf{P}_{{\text{RB}}}}
\newcommand{\Ftx}{\mathbf{F}_{{\text{T}}}}
\newcommand{\Frx}{\mathbf{F}_{{\text{R}}}}
\newcommand{\Frr}{\mathbf{F}_{{\text{RF}}}}
\newcommand{\Frt}{\mathbf{F}_{{\text{RB}}}}
\newcommand{\B}[1]{\mathbf{{#1}}}
\newcommand{\T}[1]{\widetilde{{#1}}}
\newcommand{\D}[1]{\overline{{#1}}}
\newcommand{\HH}{\mathbf{H}}
\newcommand{\PHI}{\mathbf{\Phi}}
\newcommand{\Hphi}{\HH}
\newcommand{\herm}{\dag}
\newcommand{\Sfi}{{\sigma}_{\text{F}i}}
\newcommand{\Sbi}{{\sigma}_{\text{B}i}}
\newcommand{\diag}{{\rm diag}}
\newtheorem{lemma}{Lemma}
\newcommand{\boldI} {{\bf I}}
\def\BibTeX{{\rm B\kern-.05em{\sc i\kern-.025em b}\kern-.08em
    T\kern-.1667em\lower.7ex\hbox{E}\kern-.125emX}}
\newtheorem{proposition}{Proposition}
\begin{document}

\author{\normalsize G.~Bartoli~\IEEEmembership{\normalsize Member,~IEEE}, A.~Abrardo~\IEEEmembership{\normalsize Senior~Member,~IEEE}, N.~Decarli~\IEEEmembership{\normalsize Member,~IEEE}, D.~Dardari~\IEEEmembership{\normalsize Senior~Member,~IEEE}, and M.~Di~Renzo~\IEEEmembership{\normalsize Fellow,~IEEE}
\thanks{Manuscript submitted Dec. 21, 2022. G. Bartoli and A. Abrardo are with University of Siena, Italy. (e-mail: giulio.bartoli@unisi.it, abrardo@diism.unisi.it). N. Decarli is with the Italian Research Council, Italy. (e-mail: nicolo.decarli@ieiit.cnr.it). D. Dardari is with University of Bologna, Italy. (e-mail: davide.dardari@unibo.it). M. Di Renzo is with Universit\'e Paris-Saclay, CNRS, CentraleSup\'elec (L2S), France. (e-mail: marco.di-renzo@universite-paris-saclay.fr). This work was supported, in part, by the Theory Lab, Central Research Institute, 2012 Labs, Huawei Technologies Co., Ltd., the European Commission through the H2020 ARIADNE project under grant agreement number 871464 and through the H2020 RISE-6G project under grant agreement number 101017011, and the Fulbright Foundation.} \vspace{-1.75cm}
}

\maketitle

\begin{abstract}
We consider a multiple-input multiple-output (MIMO) channel in the presence of a reconfigurable intelligent surface (RIS). Specifically, our focus is on analyzing the spatial multiplexing gains in line-of-sight and low-scattering MIMO channels in the near field. We prove that the channel capacity is achieved by diagonalizing the end-to-end transmitter-RIS-receiver channel, and applying the water-filling power allocation to the ordered product of the singular values of the transmitter-RIS and RIS-receiver channels. The obtained capacity-achieving solution requires an RIS with a non-diagonal matrix of reflection coefficients. Under the assumption of nearly-passive RIS, i.e., no power amplification is needed at the RIS, the water-filling power allocation is necessary only at the transmitter. We refer to this design of RIS as a linear, nearly-passive, reconfigurable electromagnetic object (EMO). In addition, we introduce a closed-form and low-complexity design for RIS, whose matrix of reflection coefficients is diagonal with unit-modulus entries. The reflection coefficients are given by the product of two focusing functions: one steering the RIS-aided signal towards the mid-point of the MIMO transmitter and one steering the RIS-aided signal towards the mid-point of the MIMO receiver. We prove that this solution is exact in line-of-sight channels under the paraxial setup. With the aid of extensive numerical simulations in line-of-sight (free-space) channels, we show that the proposed approach offers performance (rate and degrees of freedom) close to that obtained by numerically solving non-convex optimization problems at a high computational complexity. Also, we show that it provides performance close to that achieved by the EMO (non-diagonal RIS) in most of the considered case studies.
\end{abstract}

\begin{IEEEkeywords}
Multi-antenna systems, reconfigurable intelligent surfaces, spatial multiplexing, near field channels.
\end{IEEEkeywords}

\section{Introduction}
\label{Sec:Introduction}
The need for terabits wireless links and sub-milliseconds (tactile) time responses in next-generation wireless systems, e.g., to support emerging augmented reality and virtual reality applications based on three-dimensional holographic video representations \cite{iown}, exceeds the current capabilities of fifth-generation telecommunication standards. A possible solution to fulfill these requirements is the migration towards high frequency bands, notably the sub-terahertz and terahertz spectrum \cite{ISG-THz}, \cite{THZ-Survey-COMAG}. The terahertz band offers, in fact, tens-hundreds gigahertz of contiguous bandwidth and enables the coexistence with other regulated spectra. Due to the large available bandwidth, the use of terahertz frequencies may extend the quality of experience of fiber-optic systems to wireless links, both in terms of data rates and time responses.

The use of terahertz signals for wireless communications is, however, challenging for several reasons, including the severe propagation losses that drastically limit the communication coverage, and the lack of diffraction that makes wireless propagation occurs predominantly in line-of-sight \cite{DBLP:journals/comsur/HanWLCAKM22}. Notably, it is shown in \cite{Can-THZ} that ensuring a strong line-of-sight connection is essential for fulfilling the desired communication performance. The severe path-loss may be overcome and the availability of line-of-sight connections may be ensured with high probability through (i) a denser deployment of base stations (network densification); (ii) the deployment of relays to split long-range transmission links into multiple short-range transmission links, possibly avoiding blocking objects; and (iii) the use of highly-directive antenna arrays at the transmitter and receiver \cite{DBLP:journals/cm/AkyildizHN18}. These solutions, however, usually result in an increase of backhaul infrastructure, power consumption, and hardware complexity \cite{DBLP:journals/ojcs/RenzoNSDQLRPSZD20}. Recent national and supranational directives have set, on the other hand, strict targets in reducing the energy consumption, such as the European Union that aims to reduce the power consumption by 32.5\% before 2030 \cite{EC-EE}. Recent activities within telecommunication standards organizations, e.g., the 3rd Generation Partnership Project (3GPP), in addition, are focused on the analysis of new network nodes to offer blanket coverage in cellular networks, in those scenarios where the deployment of full-stack cells may not always be possible (e.g., no availability of backhaul) or may not be economically viable \cite{NC-Repeaters}.

In this context, two emerging technologies have recently gained major attention from the research community, and are suitable for enabling high rate, reduced hardware complexity, and high energy efficient communications, especially for transmission in the terahertz frequency band. The first technology is known as reconfigurable intelligent surface (RIS) \cite{DBLP:journals/jsac/RenzoZDAYRT20}, \cite{DBLP:journals/cm/WuZ20} and the second is known as holographic surface (HoloS) \cite{DBLP:journals/wc/HuangHAZYZRD20}, \cite{DBLP:journals/cm/DardariD21}. Both technologies capitalize on recent progresses in the field of dynamic (reconfigurable) metasurfaces, even though they serve a different purpose and have different requirements \cite{marco_di_renzo_2022_7430186}. An RIS is a relay-type surface whose main features are the low implementation complexity and the low power consumption, compared with conventional relay nodes, e.g., decode-and-forward, amplify-and-forward, and the more recent network-controlled repeaters \cite{Flamini-TAP}. Specifically, an RIS is capable of shaping the electromagnetic waves with no power amplification and analog-to-digital conversion, which reduces the hardware complexity, power consumption, and processing delay (time response). A HoloS is, on the other hand, a hybrid continuous-aperture multiple-input multiple-output (MIMO) type transceiver \cite{C-MIMO}, which is endowed with  digital signal processing and power amplification capabilities, but with a reduced number of radio frequency chains compared to legacy MIMO implementations \cite{WDM-Luca}, \cite{https://doi.org/10.48550/arxiv.2210.08616}. Notably, two focused industry specification groups were recently established on terahertz communications and intelligent surfaces to coordinate the research efforts on these technologies and to strengthen their synergy and complementary \cite{ISG-THz}, \cite{ISG-RIS}.

An RIS and a HoloS share the feature of being electrically-large, i.e., their geometric size is much larger than the wavelength. This property makes them especially suitable for enabling spatial multiplexing in line-of-sight and low-scattering MIMO channels, which have been extensively studied in the past \cite{DBLP:conf/wcnc/BohagenOO05} and are currently receiving major renewed attention fueled by emerging potential applications and technology advances \cite{DBLP:journals/cm/DoCPSLL21}. Specifically, two electrically large HoloS can support multimode transmission (spatial multiplexing) in line-of-sight channels, provided that the transmission range is within the radiative near field (Fresnel region) of each other \cite{Miller00}, \cite{Anna-Thaning-2003}, \cite{DBLP:journals/jsac/Dardari20}. Asymptotic analysis shows that the spatial multiplexing gain, i.e., the number of degrees of freedom, is proportional to the product of the apertures of the HoloS and is inversely proportional to the square of the product of the wavelength and the transmission distance. The deployment of electrically-large HoloS at terahertz frequencies is, therefore, considered a suitable option to realize terabits wireless links with tactile time responses. The deployment of electrically-large RIS in such an environment is, likewise, considered an essential technology for ensuring reconfigurable short-range multi-hop links without requiring the deployment of large numbers of HoloS, thus fulfilling the requirements of blanket coverage at low cost and high energy efficiency, while preserving tactile time responses, since the signals are processed in the electromagnetic domain \cite{DBLP:journals/cm/AkyildizHN18}. In addition, the deployment of RIS may enable spatial multiplexing gains in each line-of-sight hop, even though the line-of-sight direct link (between the transmit and receive HoloS) does not support spatial multiplexing, since operating in the far field \cite[Sec. 7.2.5]{tse_viswanath_2005}.

Despite the potential benefits of RIS-aided HoloS systems, especially in line-of-sight and low-scattering channels, the achievable spatial multiplexing gains and the optimal design for the precoding, decoding, and reflection coefficients of HoloS and RIS are an open problem. The available research works can be cast in two categories, as summarized in the next sub-section.

\subsection{State of the Art}
The first category includes papers on the analysis and design of transmission links between two HoloS. The most relevant research works for this paper include \cite{Miller00}, \cite{DBLP:journals/jsac/Dardari20},  \cite{DBLP:journals/access/DecarliD21}, \cite{WDM-Luca}. In \cite{Miller00}, the author computes the number of degrees of freedom between two HoloS in the Fresnel region under the paraxial setup. In addition, the optimal precoding functions are given in closed-form and are shown to be prolate spheroidal functions. In \cite{DBLP:journals/jsac/Dardari20}, the analysis in \cite{Miller00} is generalized to account for shorter transmission distances (or HoloS with a larger aperture). In \cite{DBLP:journals/access/DecarliD21}, the authors consider non-paraxial setups and provide an approximated expression for the degrees of freedom and for the eigenfunctions, which are shown to be focusing functions. In \cite{WDM-Luca}, the authors propose and analyze a wavenumber-division multiplexing scheme to reduce the complexity of optimal spatial multiplexing schemes based on the prolate spheroidal functions identified in \cite{Miller00}. A more comprehensive overview, including recent works on line-of-sight MIMO, e.g., \cite{DBLP:journals/cm/DoCPSLL21} and references therein, can be found in \cite{https://doi.org/10.48550/arxiv.2210.08616}. None of these papers consider the presence of RIS.

In the rest of this paper, we model the transmit and receive HoloS as two critically-sampled MIMO arrays, i.e., having antenna elements spaced by half-wavelength. This is in agreement with \cite{C-MIMO} and \cite{DBLP:journals/tsp/PizzoTSM22}, since the scope of this paper does not require to model the impact of surface waves \cite{https://doi.org/10.48550/arxiv.2210.08619}. Therefore, we refer to the considered system model as an RIS-aided MIMO channel.

The second category includes papers on the analysis and design of RIS-aided transmission links. The most relevant research works for this paper are those focused on MIMO arrays at the transmitter and the receiver, and include \cite{DBLP:journals/jsac/ZhangZ20a}, \cite{DBLP:journals/twc/PerovicTRF21}, \cite{DBLP:journals/wcl/AbrardoDRQ21}, \cite{DBLP:journals/tcom/HanZDZ22}, \cite{DBLP:journals/tcom/ZhouXLFN22}, \cite{DBLP:journals/corr/abs-2104-12108}. In \cite{DBLP:journals/jsac/ZhangZ20a}, the authors propose an alternating optimization algorithm to find a locally optimal solution by iteratively optimizing the transmit covariance matrix and the RIS reflection coefficients. In \cite{DBLP:journals/twc/PerovicTRF21}, the authors tackle a similar problem by proposing a more efficient numerical algorithm based on the projected gradient method. In  \cite{DBLP:journals/wcl/AbrardoDRQ21}, the authors consider an electromagnetic-consistent model in the presence of mutual coupling (non-diagonal RIS matrices) and generalize the weighted minimum mean square error algorithm for application to multi-RIS MIMO interference channels. In \cite{DBLP:journals/tcom/HanZDZ22}, the authors focus on (pure) line-of-sight channels, and generalize the approach in \cite{DBLP:journals/jsac/ZhangZ20a} for application to double-RIS deployments. The algorithm is specialized for application to far-field MIMO channels, hence hampering spatial multiplexing gains greater than the number of RIS, i.e., two. In \cite{DBLP:journals/tcom/ZhouXLFN22}, the authors introduce a numerical algorithm for jointly optimizing the MIMO transmit precoder, the matrix of reflection coefficients of the RIS, and the MIMO receive equalizer, by minimizing the data detection mean square error. In \cite{DBLP:journals/corr/abs-2104-12108}, the authors propose three numerical algorithms to maximize the sum-rate in RIS-aided MIMO broadcast channels. In all these papers, no closed-form expression for the matrix of reflection coefficients of the RIS is given, and all the algorithms are sub-optimal and are shown to be sensitive to the initialization point and to the setup of the parameters e.g., the step-size in gradient-based methods.

\subsection{Contributions}
Based on the analysis of the state of the art, we evince that there is no contribution that characterizes the optimal design of RIS-aided MIMO channels in terms of maximizing the mutual information. Also, there is no contribution that provides a closed-form, even approximated, expression for the reflection coefficients of RIS in line-of-sight and low-scattering channels. In this context, the present paper provides the following specific contributions.
\begin{itemize}
\item In a general RIS-aided MIMO system, we prove that the channel capacity is achieved by diagonalizing the end-to-end transmitter-RIS-receiver channel, and by applying the water-filling power allocation to the ordered product of the singular values of the (two individual) transmitter-RIS and RIS-receiver channels. The obtained capacity-achieving solution is shown to require an RIS with a non-diagonal matrix of reflection coefficients. Recently, non-diagonal RIS have been receiving attention from the research community, since they can provide better performance at the expense of an increase of the hardware and implementation complexity, e.g., \cite{DBLP:journals/twc/ShenCM22}, \cite{DBLP:journals/tvt/LiEHSMCH22}, \cite{TWC-BeyondDiagonal-2022}. To the best of the authors knowledge, however, no prior research work on non-diagonal RIS has proved the optimality of this architecture from an information-theoretic standpoint. Under the assumption of nearly-passive RIS, i.e., no power amplification is needed at the RIS, we prove that the water-filling power allocation is necessary only at the MIMO transmitter. We refer to this capacity-optimal RIS as a linear, nearly-passive, reconfigurable electromagnetic object (EMO). 

\item Even though an EMO is nearly-passive, i.e., it does not require power amplifiers, and is capacity-achieving, it is characterized by a matrix of reflection coefficients with a non-diagonal structure. In general, all the entries of the matrix may be non-zero. This entails a non-negligible implementation complexity to realize the associated configuration network \cite{DBLP:journals/twc/ShenCM22} or even the need of non-local designs \cite{doi:10.1126/sciadv.1602714}, \cite{DBLP:journals/pieee/RenzoDT22}. It is known, in addition, that RIS-aided systems characterized by non-diagonal matrices are more difficult to be optimized, in general \cite{DBLP:journals/wcl/AbrardoDRQ21}. Specifically, it is known that no closed-form solution for the optimal design of RIS is available in RIS-aided (pure) line-of-sight MIMO channels either \cite{DBLP:journals/tcom/HanZDZ22}. As detailed in \cite[Eq. (5)]{DBLP:journals/tcom/HanZDZ22}, this is due to the non-convexity of the optimization problem and the coupling among the optimization variables (the matrix of reflection coefficients of the RIS and the precoding matrix at the MIMO transmitter). It is known, in addition, that the main computational complexity is due to the presence of the RIS, and that the existing optimization algorithms are computational intensive, and are highly sensitive to the initialization point and the free parameters of the algorithms (e.g., the step size) \cite{DBLP:journals/corr/abs-2104-12108}. To overcome these difficulties, we introduce an approximated closed-form and low-complexity design for RIS, whose matrix of reflection coefficients is diagonal with unit-modulus entries. The reflection coefficients are given by the product of two focusing functions: one steering the RIS-aided signal towards the mid-point of the MIMO transmitter and one steering the RIS-aided signal towards the mid-point of the MIMO receiver. Notably, we prove that the proposed solution is exact in (pure) line-of-sight channels under the paraxial setup \cite{Miller00}. 

\item With the aid of extensive numerical simulations in line-of-sight (free-space) fading channels, we show that the proposed design for (diagonal) RIS offers performance (rate and degrees of freedom) close to that obtained by numerically solving non-convex optimization problems, which is known to entail a high computational complexity. Specifically, the numerical results unveil that the proposed solution works well even under non-paraxial setups, for which no analysis of the degrees of freedom  and no closed-form expression of the channel eigenfunctions are available in closed-form for RIS-aided links \cite{Miller00}, \cite{DBLP:journals/access/DecarliD21}. In addition, the numerical results show that the proposed approximated diagonal RIS offers performance very close to that of the EMO (with a non-diagonal structure), in several considered case studies. The proposed design can be considered as an approximate closed-form optimal solution for RIS-aided MIMO channels or a good-quality initial point for efficiently implementing currently-available non-convex optimization algorithms \cite{DBLP:journals/jsac/ZhangZ20a}, \cite{DBLP:journals/twc/PerovicTRF21}. Also, the proposed solution is applicable in every signal-to-noise-ratio regime, since the optimal precoding and decoding matrices at the MIMO transmit and MIMO receiver are obtained by using conventional singular-value decomposition methods \cite{DBLP:journals/ett/Telatar99}.
\end{itemize}

\subsection{Organization}
The rest of this paper is organized as follows. In Section II, we introduce the signal and system models. In Section III, we characterize the capacity-achieving structure of RIS in general MIMO channels. In Section IV, we introduce a sub-optimal design for diagonal RIS, which is proved to be optimal in line-of-sight channels under the paraxial setup. In Section V, we illustrate extensive simulation results over line-of-sight (free-space) channels, and non-paraxial system setups, to showcase the performance of the proposed closed-form sub-optimal design for RIS against state of the art numerical methods and the optimal EMO. In Section VI, conclusions are given.

\section{Signal and System Models}
\label{Sec:SysMod}

\begin{figure}[t!]
     \centering
         \includegraphics[width=\textwidth]{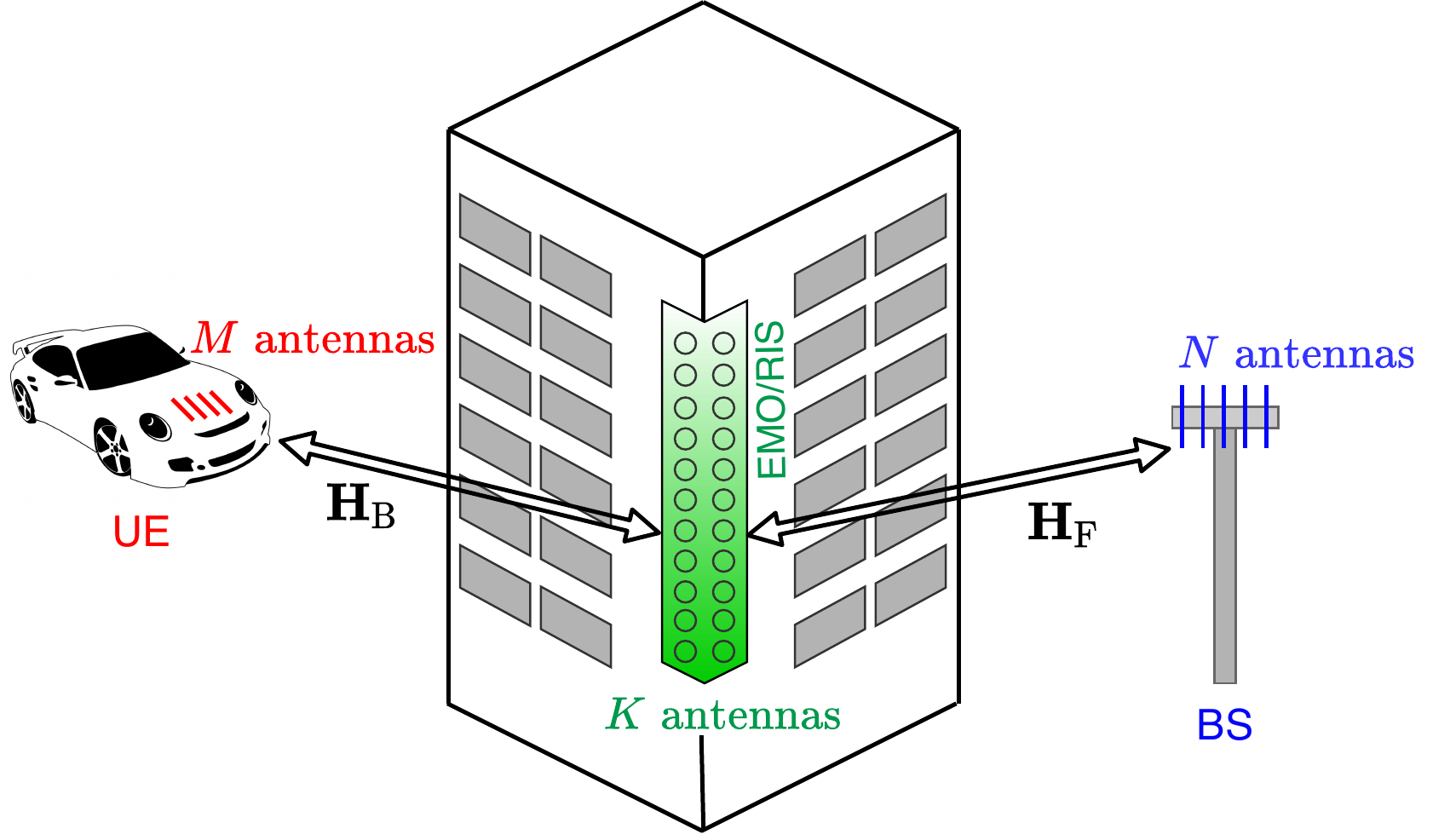}
         \caption{Considered scenario: A MIMO transmitter (a BS) communicates with a MIMO receiver (a UE) with the aid of an RIS.}
         \label{fig:Scenario}
\end{figure}

We consider the communication system sketched in Fig.~\ref{fig:Scenario}, where a MIMO transmitter, e.g., a base station (BS), communicates with a MIMO receiver, e.g., a user equipment (UE), with the aid of an RIS. The transmitter and the receiver are equipped with $\kt$ and $\kr$ antennas, respectively. The RIS is modeled as a MIMO array compromising $K$ reconfigurable scattering elements. As discussed in Section I, the antennas at the transmitter and receiver, and the reconfigurable elements at the RIS are assumed to be critically spaced at half-wavelength. 

The channel matrix between the transmitter and the RIS is denoted by $\Hf \in \mathbb{C}^{\kris \times \kt}$ (forward channel), and the channel matrix between the RIS and the receiver is denoted by $\Hb \in \mathbb{C}^{\kr\times \kris}$ (backward or scattered channel). No specific assumption about the characteristics of the channels are made, even though the focus of this paper is on analyzing the multiplexing gain of the considered RIS-aided MIMO system in line-of-sight and low-scattering channels, where no spatial multiplexing gain can be obtained in the far field \cite{tse_viswanath_2005}.

The end-to-end (transmitter-RIS-receiver) MIMO channel can be formulated as
\begin{equation}\label{eq:TXRXch_base}
    \Hphi=\Hb\PHI\Hf\
\end{equation}
where $\PHI$ denotes the $K \times K$ matrix of reflection coefficients of the RIS.

Typical designs for RIS assume that the matrix of reflection coefficients $\PHI$ is diagonal with unit-modulus entries \cite{TWC-BeyondDiagonal-2022}. This design eases the implementation complexity of RIS, but may not necessarily be optimal from the point of view of the power efficiency \cite{doi:10.1126/sciadv.1602714}, \cite{DBLP:journals/pieee/RenzoDT22}, performance \cite{DBLP:journals/twc/ShenCM22}, or may not consider the impact of mutual coupling in sub-wavelength (with inter-distances less than half-wavelength) implementations \cite{DBLP:journals/wcl/GradoniR21}. Collectively, as mentioned in Section I, these designs fall within the umbrella of non-diagonal RIS.  However, there exist no information-theoretic studies that quantify the optimality of non-diagonal RIS compared with diagonal RIS. 

To shed light on the optimal design of RIS from an information-theoretic standpoint, we depart by considering an arbitrary structure for the matrix $\PHI$, which is not necessarily diagonal. The only imposed constraint for $\PHI$ is to be a unitary matrix, i.e., $\PHI\PHI^{\herm} = {\bf{I}}_K$, where $(\cdot)^{\herm}$ is the Hermitian adjoint operator and ${\bf{I}}_K$ is the $K \times K$ identity matrix. The assumption of unitary matrix ensures that the RIS does not amplify the incident signals but relaxes the assumption of diagonal structure. This will be apparent in the next sections.

An RIS with a non-diagonal unitary matrix $\PHI$ is referred to as an EMO with the following characteristics: 
\begin{itemize}
    \item It is a noise-less device, as opposed to non-regenerative relays \cite{DBLP:journals/ojcs/RenzoNSDQLRPSZD20};
    \item It is a linear device whose output signals are linear combinations of the input signals;
    \item It is reconfigurable, i.e., the matrix $\PHI$ can be optimized as desired;
    \item It is nearly-passive, i.e., it does not amplify the incident signals thanks to the design constraint $\PHI\PHI^{\herm} = {\bf{I}}_K$. A formal proof for this statement is given in the next section.
\end{itemize}

To facilitate the analysis, the forward and backward channels can be formulated by invoking the singular-value decomposition (SVD) applied to $\Hf$ and $\Hb$. Specifically, let $\diag \{v_1, v_2, \ldots \}$ denote a general rectangular matrix whose main diagonal has elements given by $v_1$, $v_2$, $\cdots$, and let $\Sfi$ and $\Sbi$ denote the $\kt$ and $\kr$  (non-zero and zero) ordered singular values of $\Hf$ and $\Hb$, respectively. For simplicity, we assume $N < K$ and $M < K$. Also, let $N_{\text{F}}\le\kt$ and $N_{\text{B}}\le\kr$ be the numbers of non-zero singular values of $\Hf$ and $\Hb$, respectively. Then, we introduce the following matrices
\begin{IEEEeqnarray}{rCl}
    \Sf&=&\diag{\left\{\sigma_{\text{F}1}, \ldots, \sigma_{\text{F}N} \right\}} \, \in \mathbb{C}^{\kris\times\kt}  \\
    \Sb&=&\diag{\left\{\sigma_{\text{B}1}, \ldots, \sigma_{\text{B}M} \right\}} \, \in \mathbb{C}^{\kr\times \kris} 
\end{IEEEeqnarray}

In addition, let $\Uf \in \mathbb{C}^{\kris\times \kris}$, $\Vf \in \mathbb{C}^{\kt\times \kt}$, $\Ub \in \mathbb{C}^{\kr\times \kr}$, and $\Vb \in \mathbb{C}^{\kris\times \kris}$ be unitary matrices fulfilling the following equalities
\begin{IEEEeqnarray}{rCl}
    \Hf&=&\Uf\Sf\Vf^{\herm}  \label{eq:SVDf} \\
    \Hb&=&\Ub\Sb\Vb^{\herm} \, \label{eq:SVDb}
\end{IEEEeqnarray}

Therefore, the end-to-end channel $\HH$ can be expressed as
\begin{equation}\label{eq:TXRXch}
    \Hphi=\Hb\PHI\Hf=\Ub\Sb\Vb^{\herm}\PHI\Uf\Sf\Vf^{\herm}
\end{equation}

By definition, the maximum number of communication modes (i.e., the channel degrees of freedom) is ${\text{DoF}}_{\max}=\text{min}(\kt,\kr)$. The actual number of degrees of freedom highly depends on the properties of the channels $\Hf$ and $\Hb$, and on the matrix $\PHI$ of RIS reflection coefficients. The objective of this paper is to optimize the matrix $\PHI$ in order to maximize the end-to-end channel capacity given the channels $\Hf$ and $\Hb$.

\section{Capacity-Achieving Matrix of Reflection Coefficients}
\label{Sec:Optimum}
In this section, we identify the optimal structure for the unitary matrix $\PHI$ that is capacity-achieving. The received signal $\B{y}\in\mathbb{C}^{\kr}$ is
\begin{equation}
    \B{y} = \Hphi\B{x}+\B{n}
\end{equation}
where $\B{x}\in\mathbb{C}^{\kt}$ is the transmitted signal, $\B{n}\in\mathbb{C}^\kr$ is the additive white Gaussian noise (AWGN) with $\B{n} \sim {\mathcal{CN}}\left (\mathbf{0}, \sigma_{n}^2 \boldI_M \right )$, where $\sigma_{n}^2$ is the noise power and ${\mathcal{CN}}\left (\cdot, \cdot \right)$ denotes a multivariate complex Normal distribution. 

Also, let $\B{Q}$ be the positive semi-definite (i.e., ${\bf{Q}} \succeq 0$) covariance matrix of the transmitted signal, i.e., $\mathbb{E}[\B{x}\B{x}^{\herm}]=\B{Q}$, where $\mathbb{E}[\cdot]$ is the expectation operator. Then, the  transmit power constraint is $\operatorname{tr}(\B{Q})\leq P$, where $\operatorname{tr}(\cdot)$ is the trace operator and $P$ is the total transmit power.

According to \cite{tse_viswanath_2005}, the channel capacity is obtained by maximizing the mutual information between the transmitted and received symbols as a function of ${\bf{Q}}$ and $\bf{\Phi }$
\begin{IEEEeqnarray}{rCl} \label{eq:MutualInf}
C= \mathop {\max }\limits_{\scriptstyle{\bf{Q}} \succeq 0,{\rm{tr}}\left( {\bf{Q}} \right) \le P\atop
\scriptstyle{\bf{\Phi }}{{\bf{\Phi }}^\dag } = {{\bf{I}}_K}} I\left( {{\bf{x}},{\bf{y}}} \right) = \log \det \left( {{\bf{I}}_M + {\bf{HQ}}{{\bf{H}}^\dag }} \right)
\end{IEEEeqnarray}
where $\det \left( \cdot \right)$ is the determinant of a square matrix.

The channel capacity in \eqref{eq:MutualInf} and how to achieve it are given in the following proposition.
\begin{proposition}
The end-to-end channel capacity in \eqref{eq:MutualInf} is 
\begin{equation}
C = \sum\limits_{i = 1}^{\min \left\{ {N,M} \right\}} {\log \left( {1 + \sigma _{{\rm{F}}i}^2\sigma _{{\rm{B}}i}^2\frac{{{P_i}}}{{\sigma _n^2}}} \right)} \label{Eq_C}
\end{equation}
where ${P_i} = {\left( {\mu  - \frac{{\sigma _n^2}}{{\sigma _{{\rm{F}}i}^2\sigma _{{\rm{B}}i}^2}}} \right)^ + } = \max \left( {0,\mu  - \frac{{\sigma _n^2}}{{\sigma _{{\rm{F}}i}^2\sigma _{{\rm{B}}i}^2}}} \right)$ and $\mu$ is obtained by fulfilling the identity $\sum\nolimits_{i = 1}^{\min \left\{ {N,M} \right\}} {{P_i}}  = P$. In addition, $C$ is attained by setting
\begin{IEEEeqnarray}{l}
{\bf{Q}} = {\rm{diag}}\left\{ {{P_1},{P_2}, \ldots ,} \right\}\\
{\bf{\Phi }} = {{\bf{V}}_{\rm B}}{\bf{U}}_{\rm F}^\dag \label{Eq_EMOopt}
\end{IEEEeqnarray}
\end{proposition}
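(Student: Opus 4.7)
My plan is to prove achievability by direct substitution, then establish the matching converse by singular-value majorization, finishing with a short verification of the ``nearly-passive'' property.

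\textbf{Achievability.} Substituting $\PHI=\Vb\Uf^{\herm}$ into (\ref{eq:TXRXch}) and using the unitarity of $\Vb$ and $\Uf$ collapses the middle factors, yielding $\HH=\Ub(\Sb\Sf)\Vf^{\herm}$. Because $\Sb$ and $\Sf$ are rectangular diagonal, the product $\Sb\Sf$ is a rectangular diagonal matrix whose non-zero entries are $\Sfi\Sbi$, already in decreasing order since each of $\{\Sfi\}$ and $\{\Sbi\}$ is. Thus this expression is itself an SVD of $\HH$ with singular values $\Sfi\Sbi$, and the classical parallel-Gaussian argument of Telatar applies: taking $\B{Q}=\Vf\,\diag\{P_{1},P_{2},\ldots\}\Vf^{\herm}$ with the water-filling allocation stated in the proposition yields exactly (\ref{Eq_C}).

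\textbf{Converse.} For an arbitrary unitary $\PHI$ and any $\B{Q}\succeq 0$ with $\operatorname{tr}(\B{Q})\le P$, I would define $\B{X}=\Hb\PHI\Hf\B{Q}^{1/2}$, so that the mutual information equals $\sum_{i}\log(1+\sigma_{i}^{2}(\B{X})/\sigma_{n}^{2})$. Iterating Horn's multiplicative singular-value inequality on the four-matrix product $\B{X}$, and using $\sigma_{i}(\PHI)=1$ for all $i$ as a direct consequence of $\PHI\PHI^{\herm}={\bf I}_{K}$, gives the weak log-majorization
\begin{equation*}
    \prod_{i=1}^{k}\sigma_{i}^{2}(\B{X})\;\le\;\prod_{i=1}^{k}\Sfi^{2}\Sbi^{2}\,q_{i},\qquad k=1,2,\ldots,
\end{equation*}
where $q_{1}\ge q_{2}\ge\cdots$ are the eigenvalues of $\B{Q}$ (so $\sum_{i}q_{i}=\operatorname{tr}(\B{Q})\le P$). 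The bounding sequence is itself in decreasing order because each of $\{\Sbi\}$, $\{\Sfi\}$ and $\{q_{i}\}$ is.

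\textbf{Main obstacle.} The step I expect to require the most care is lifting this log-majorization of singular values into an inequality on the mutual information itself. My plan is to apply the Hardy--Littlewood--P\'olya theorem to the function $g(y)=\log(1+e^{y}/\sigma_{n}^{2})$, which is convex and monotonically increasing in $y$. That yields
\begin{equation*}
    \sum_{i}\log\!\bigl(1+\sigma_{i}^{2}(\B{X})/\sigma_{n}^{2}\bigr)\;\le\;\sum_{i}\log\!\bigl(1+\Sfi^{2}\Sbi^{2}q_{i}/\sigma_{n}^{2}\bigr),
\end{equation*}
after which maximizing the right-hand side over $q_{i}\ge 0$ with $\sum_{i}q_{i}\le P$ is the standard parallel-Gaussian problem, solved precisely by the water-filling allocation in the proposition. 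Since $\PHI$ and $\B{Q}$ were arbitrary, this upper bound matches the achievability and certifies the optimality of $\PHI=\Vb\Uf^{\herm}$.

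The ``nearly-passive'' property is essentially free: for any RIS input $\B{u}$ one has $\|\PHI\B{u}\|^{2}=\B{u}^{\herm}\PHI^{\herm}\PHI\B{u}=\|\B{u}\|^{2}$, so the unitary constraint $\PHI\PHI^{\herm}={\bf I}_{K}$ guarantees that the RIS neither amplifies nor attenuates the total incident power, justifying that water-filling is only required at the transmitter.
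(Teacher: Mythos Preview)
Your argument is correct, and it is genuinely different from the paper's. The paper parameterizes $\PHI=\Vb\B{X}\Uf^{\herm}$ for a unitary $\B{X}$, introduces $\T{\B{Q}}=\Vf^{\herm}\B{Q}\Vf$, and then applies the Hadamard inequality $\det(\B{A})\le\prod_i(\B{A})_{ii}$ for positive semi-definite $\B{A}$ to obtain an upper bound that is attained with equality when $\T{\B{Q}}$ and $\B{X}$ are diagonal; the proof concludes by taking $\B{X}={\bf I}_K$ and invoking water-filling. Your route bypasses the parameterization entirely: you bound the singular values of the whole product $\Hb\PHI\Hf\B{Q}^{1/2}$ via Horn's multiplicative inequality (using $\sigma_i(\PHI)=1$), and then lift weak log-majorization to an inequality on the mutual information through the convex increasing map $g(y)=\log(1+e^{y}/\sigma_n^2)$. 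The iteration of Horn over four factors is legitimate because all four singular-value sequences are already decreasing, so the partial-sum inequalities add and transitivity of weak majorization closes the chain. What your approach buys is a clean separation of achievability and converse, with the converse holding uniformly in $(\PHI,\B{Q})$; the paper's Hadamard bound, by contrast, moves with the choice of $\T{\B{Q}}$ and $\B{X}$, so its optimality argument rests on recognizing that the equality case simultaneously maximizes the bound. What the paper's approach buys is that the optimal structure $\PHI=\Vb\Uf^{\herm}$ emerges from the analysis rather than being guessed for achievability. A minor remark: your achievability paragraph correctly writes $\B{Q}=\Vf\,\diag\{P_1,P_2,\ldots\}\Vf^{\herm}$, which is what the paper's own appendix derives (diagonal $\T{\B{Q}}$); the proposition's display $\B{Q}=\diag\{P_1,P_2,\ldots\}$ should be read in that rotated basis.
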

\begin{proof}
See Appendix A.
\end{proof}

Based on Proposition 1, the following comments are in order.
\begin{itemize}

\item The capacity in \eqref{Eq_C} generalizes the capacity for MIMO channels in \cite{tse_viswanath_2005}, by considering the impact of an EMO (the optimal non-diagonal unitary RIS). Similar to \cite{tse_viswanath_2005}, the end-to-end channel capacity is attained by diagonalizing the end-to-end transmitter-RIS-receiver channel, and by applying the water-filling power allocation to the ordered product of the singular values of the transmitter-RIS and RIS-receiver channels taken individually. 

\item The capacity in \eqref{Eq_C} holds true for any channel models, and offers a simple closed-form solution for the optimal covariance matrix at the transmitter and the optimal design for the RIS. This is in stark contrast with computational intensive algorithms that are typically utilized for optimizing diagonal and non-diagonal RIS, e.g., \cite{DBLP:journals/tcom/HanZDZ22}, \cite{DBLP:journals/corr/abs-2104-12108}, \cite{TWC-BeyondDiagonal-2022}.

\item An RIS with a diagonal matrix ${\bf{\Phi }}$ of unit-modulus reflection coefficients is, in general, not capacity-achieving. An RIS with a non-diagonal unitary matrix ${\bf{\Phi }}$ may be, on the other hand, capacity-achieving. Also, the optimal design is available in a closed-form expression based on the SVD decomposition of the transmitter-RIS and RIS-receiver channels.

\item By direct inspection of \eqref{Eq_C}, it is apparent that considering a non-diagonal but unitary matrix ${\bf{\Phi }}$ is sufficient for ensuring that the RIS does not result in any power amplification, thus making the EMO a nearly-passive device like a diagonal RIS.

\item Similar to \cite{tse_viswanath_2005}, the capacity in \eqref{Eq_C} holds true for any operating signal-to-noise ratio.

\item The formula in \eqref{Eq_C} resembles the capacity of non-regenerative relays \cite{SAM-2006} and \cite{DBLP:journals/twc/TangH07}. The main differences between the EMO and a non-regenerative relay are that the former is noise-free and no power allocation at the EMO is needed, since it is characterized by a unitary matrix. This makes the proof of Proposition 1 slightly different from \cite{SAM-2006} and \cite{DBLP:journals/twc/TangH07}, since there is no need to apply a whitening matrix. Due to the capability of non-regenerative relays to amplify the received signal, the optimal power allocation strategy is known, in a closed-form expression, only if the transmitter applies an equal power allocation strategy \cite{DBLP:journals/twc/TangH07}. Otherwise, the optimal power allocation needs to be computed by solving an optimization algorithm. The optimal power allocation at the MIMO transmitter and the optimal matrix of the RIS reflection coefficients are, on the other hand, formulated in closed-form expressions in Proposition 1.

\end{itemize}

Proposition 1 unveils that the optimal design for an RIS with a unitary matrix of reflection coefficients is non-diagonal. Non-diagonal RIS have attracted recent interest in the literature, and, supported by Proposition 1, they usually outperform diagonal RIS \cite{TWC-BeyondDiagonal-2022}. However, their practical implementation is more difficult, especially due to the complex configuration network to realize non-diagonal matrices. Based on the optimal design for unitary RIS in \eqref{Eq_EMOopt}, it is, however, possible to investigate approximated diagonal (hence unitary by definition) designs for RIS that provide a rate close to the end-to-end channel capacity in \eqref{Eq_C}. This is discussed next.

\section{Proposed Design for Nearly-Optimal Diagonal RIS}
\label{Sec:RIS}

In this section, we propose an approximated design for RIS whose matrix of reflection coefficients is diagonal, and analyze the conditions under which it is either optimal or nearly-optimal. As a criterion of optimality, we consider the end-to-end channel capacity in Proposition 1. In other words, a diagonal RIS is deemed optimal if the resulting end-to-end transmitter-RIS-receiver channel can be diagonalized and the capacity in \eqref{Eq_C} is attained. For the avoidance of doubt, the diagonal matrix of reflection coefficients of the RIS is denoted by $\tilde{\PHI}$.

Without loss of generality, we assume that the $K \times K$ diagonal matrix $\tilde{\PHI}$ is given by the product of two $K \times K$ diagonal matrices $\mathbf{F}_{\text{RB}}$ and $\mathbf{F}_{\text{RF}}$, as follows
\begin{equation}
\tilde{\PHI}=\mathbf{F}_{\text{RB}} \mathbf{F}_{\text{RF}}^\dag  \label{eq:phi_focus}
\end{equation}

The definition and meaning of $\mathbf{F}_{\text{RB}}$ and $\mathbf{F}_{\text{RF}}$ based on the considered criterion of optimality are given next. Furthermore, let us express the unitary matrices $\Uf \in \mathbb{C}^{\kris\times \kris}$, $\Vf \in \mathbb{C}^{\kt\times \kt}$, $\Ub \in \mathbb{C}^{\kr\times \kr}$, and $\Vb \in \mathbb{C}^{\kris\times \kris}$ in \eqref{eq:SVDf} and \eqref{eq:SVDb}, as follows
\begin{IEEEeqnarray}{rCl}
    \Vf&=&\Ftx\Ptx \in \mathbb{C}^{\kt\times \kt} \label{eq:Vf} \\
    \Uf&=&\Frr\Prr \in \mathbb{C}^{\kris\times \kris} \label{eq:Uf}\\
    \Vb&=&\Frt\Prt \in \mathbb{C}^{\kris\times \kris} \label{eq:Vb} \\
    \Ub&=&\Frx\Prx \in \mathbb{C}^{\kr\times \kr} \label{eq:Ub}
\end{IEEEeqnarray}
where $\Ftx$ and $\Ftx$ are $N \times N$ and $M \times M$ diagonal matrices, respectively, $\Ptx$ and $\Prx$ are $N \times N$ and $M \times M$ non-diagonal unitary matrices, respectively, and $\Prr$ and $\Prt$ are $K \times K$ non-diagonal unitary matrices. The analytical formulation in \eqref{eq:Vf}-\eqref{eq:Ub} holds, without loss of generality, for any choice of the diagonal matrices $\mathbf{F}_{\text{RB}}$ and $\mathbf{F}_{\text{RF}}$ in \eqref{eq:phi_focus}, and for any channels $\Hf$ and $\Hb$, by appropriately choosing the non-diagonal unitary matrices $\Ptx$, $\Prx$, $\Prr$, and $\Prt$.

Based on these definitions, the end-to-end channel in \eqref{eq:TXRXch} can be reformulated as
\begin{equation}
    \HH=\Hb \tilde \PHI \Hf=\left(\Frx\Prx \Sb \Prt^{\herm}\mathbf{F}_{\text{RB}}^{\herm}\right) \left(\mathbf{F}_{\text{RB}} \mathbf{F}_{\text{RF}}^\dag \right) \left(\mathbf{F}_{\text{RF}} \Prr \Sf \Ptx^{\herm}\Ftx^{\herm}\right) \label{eq:H}
\end{equation}

Since $\mathbf{F}_{\text{RB}}^{\herm} \mathbf{F}_{\text{RB}} = {\bf{I}}_K$ and $\mathbf{F}_{\text{RF}}^{\herm} \mathbf{F}_{\text{RF}} = {\bf{I}}_K$, we obtain
\begin{IEEEeqnarray}{rCl}
    \HH &=& \left(\Frx\Prx \Sb\right) \left(\Prt^{\herm} \Prr\right) \left(\Sf \Ptx^{\herm}\Ftx^{\herm}\right)
\end{IEEEeqnarray}

Since, by definition, $\Ub = \Frx\Prx$ and $\Vf^{\herm} = \Ptx^{\herm}\Ftx^{\herm}$, we obtain
\begin{IEEEeqnarray}{rCl}
    \HH &=& \left(\Ub \Sb\right) \left(\Prt^{\herm} \Prr\right) \left(\Sf \Vf^{\herm}\right) \label{Eq_Ch_Approx}
\end{IEEEeqnarray}

From Proposition 1, the end-to-end channel capacity in \eqref{Eq_C} is obtained by diagonalizing the transmitter-RIS-receiver channel, as follows
\begin{IEEEeqnarray}{rCl}
   \HH_{\rm{diag}} &=& \Hb \left({{\bf{V}}_{\rm B}}{\bf{U}}_{\rm F}^\dag\right) \Hf = \Ub \Sb \Sf \Vf^{\herm} = \Ub {\mathbf\Gamma} \Vf^{\herm} \label{Eq_Ch_Diag}
\end{IEEEeqnarray}
where ${\mathbf\Gamma} = \Sb \Sf \in \mathbb{C}^{\kr \times \kt}$ is an $\kr \times \kt$ diagonal matrix.

By comparing \eqref{Eq_Ch_Approx} and \eqref{Eq_Ch_Diag}, we evince that a diagonal RIS is optimal from the end-to-end channel capacity standpoint if $\Prt^{\herm} \Prr= {\bf{I}}_K$, since the end-to-end channel is diagonalized according to Proposition 1, i.e., $\HH  = \HH_{\rm{diag}}$. In other words, a diagonal RIS would be nearly-optimal, from the end-to-end channel capacity standpoint, in all setups in which $\HH  \approx \HH_{\rm{diag}}$. This provides a simple criterion for optimizing the matrix $\PHI$, while ensuring a low-complexity implementation (i.e., a diagonal structure) and near-optimality based on Proposition 1. Therefore, several designs for RIS may be conceived, offering the desired tradeoff between achievable rate and implementation complexity, in contrast to fully-connected non-diagonal RIS \cite{TWC-BeyondDiagonal-2022}.

Based on the design criterion $\HH  \approx \HH_{\rm{diag}}$ for optimizing the diagonal matrix $\tilde \PHI$, it is worth analyzing whether there exist channel models and setups under which a diagonal RIS is strictly optimal, i.e., $\HH  = \HH_{\rm{diag}}$. In addition, it is worth analyzing whether approximated closed-form expressions for $\tilde \PHI$ that fulfill the condition $\HH  \approx \HH_{\rm{diag}}$ can be identified, especially in line-of-sight or low-scattering channels. These aspects are discussed in the next two sub-sections.

\begin{figure}[!t]
     \centering
         \includegraphics[width=\textwidth]{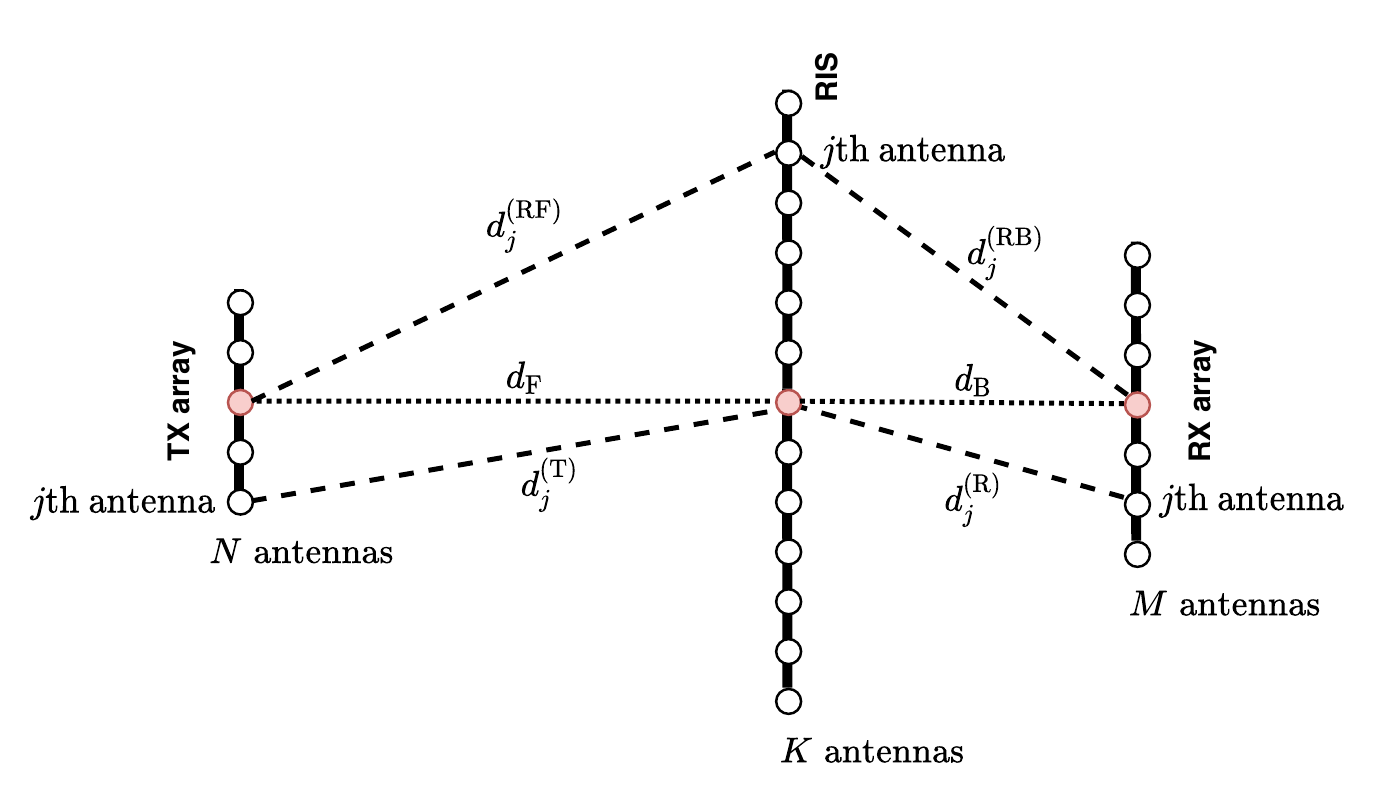}
         \caption{Example of deployment scenario in which a diagonal RIS is optimal from the capacity standpoint: Line-of-sight channels and paraxial setup.}
         \label{fig:Scheme}
\end{figure}
\subsection{Line-of-Sight Channels and Paraxial Setup}
A known and important case study in which a diagonal RIS is strictly optimal, i.e., $\HH  = \HH_{\rm{diag}}$, is sketched in Fig. \ref{fig:Scheme} for, without loss of generality, linear arrays \cite{Anna-Thaning-2003}. Specifically, the scenario in Fig. \ref{fig:Scheme} is representative of a refracting RIS, enabling, e.g., outdoor-to-indoor or room-to-room communications \cite{DBLP:journals/cm/ZhangZDTRDHPS22}. We consider (pure) line-of-sight channels where the MIMO transmitter, the RIS, and the MIMO receiver are in the near field of each other (Fresnel region), the mid-points of the three antenna arrays are perfectly aligned, and the MIMO transmitter and MIMO receiver are located on the opposite sides of the RIS. This network topology is usually referred to as paraxial setup \cite{Miller00}. Also, we assume that the distance between the mid-points of the MIMO transmitter and the RIS ($d_{\rm{F}}$ in Fig. \ref{fig:Scheme}), and the distance between the mid-points of the RIS and the MIMO receiver ($d_{\rm{B}}$ in Fig. \ref{fig:Scheme}) are the same, i.e., $d_{\rm{F}}=d_{\rm{B}}$, and $N=M$. In detail, the line-of-sight transmitter-RIS and RIS-receiver links are denoted by ${\bf H}_{\rm F}=\left\{h_{nk}^{(\rm{F})}\right\}$ and ${\bf H}_{\rm B}=\left\{h_{km}^{(\rm{B})}\right\}$
\begin{align}
    h_{nk}^{(\rm{F})}&=\frac{1}{4\pi d_{nk}^{(\rm{F})}}e^{\jmath \kappa d_{nk}^{(\rm{F})}} \label{eq:FChannel}\\
    h_{km}^{(\rm{B})}&=\frac{1}{4\pi d_{km}^{(\rm{B})}}e^{\jmath \kappa d_{km}^{(\rm{B})}} \label{eq:BChannel}
\end{align}
where $d_{nk}^{(\rm{F})}$ is the distance between the $n$th antenna of the MIMO transmitter and the $k$th element of the RIS, $d_{km}^{(\rm{B})}$ is the distance between the $k$th element of the RIS and the $m$th antenna of the MIMO receiver, $\jmath$ is the imaginary unit,  and $\kappa = 2 \pi / \lambda$ with $\lambda$ being the wavelength.

Under these conditions, it is known from \cite{Miller00} (for surfaces or planar antenna arrays) and \cite{Anna-Thaning-2003} (for lines or linear arrays) that $\Prt^{\herm} \Prr= {\bf{I}}_K$, and, notably, the matrices $\Prt$ and $\Prr$ are obtained by spatially sampling, at half-wavelength, prolate spheroidal functions \cite{SlePol:J61a}. In addition, the diagonal matrix $\tilde{\PHI}=\mathbf{F}_{\text{RB}} \mathbf{F}_{\text{RF}}^\dag $ is available in a closed-form expression, as
\begin{IEEEeqnarray}{l}
    \Frr=\diag{\left\{e^{\jmath\kappa {d_{1}^{(\text{RF})}}}, \ldots, e^{\jmath\kappa {d_\kris^{(\text{RF})}}}\right\}}\,\in \mathbb{C}^{\kris\times \kris} \label{Eq_FocusingRF} \\
    \Frt=\diag{\left\{e^{\jmath\kappa {d_{1}^{(\text{RB})}}}, \ldots, e^{\jmath\kappa {d_{\kris}^{(\text{RB})}}}\right\}}\,\in \mathbb{C}^{\kris\times \kris} \label{Eq_FocusingRB} 
\end{IEEEeqnarray}
where, as illustrated in Fig. \ref{fig:Scheme}, $d_j^{(\text{RF})}$ is the distance between the mid-point of the MIMO transmitter and the $j$th element of the RIS, and $d_j^{(\text{RB})}$ is the distance between the $j$th element of the RIS and the mid-point of the MIMO receiver. In \cite{Miller00}, the functions in \eqref{Eq_FocusingRF} and \eqref{Eq_FocusingRB} are referred to as focusing functions. In this scenario, in addition, the functions $\Ftx$, $\Ftx$, $\Ptx$, and $\Prx$ in \eqref{eq:Vf} and \eqref{eq:Ub} can be computed analogously, as elaborated in \cite{Miller00}.

\subsection{Approximated Closed-Form Expression for $\tilde{\PHI}=\mathbf{F}_{\rm{RB}} \mathbf{F}_{\rm{RF}}^\dag $}
The case study analyzed in the previous section is a very special setup. However, it is instrumental to motivate the proposed approximated design for $\tilde{\PHI}=\mathbf{F}_{\rm{RB}} \mathbf{F}_{\rm{RF}}^\dag$. Based on recent results, e.g., \cite{DBLP:journals/tcom/AbrardoDR21}, \cite{DBLP:journals/jstsp/PanZZHWPRRSZZ22}, it is considered difficult to optimize an RIS based on instantaneous channel state information, e.g., based on estimates of $\Hf$ and $\Hb$ that account for the impact of small-scale fading. This is due to the associated large pilot overhead, which typically grows with the product of the number of RIS elements ($K$) and the number of users in the network. A more pragmatic choice is to optimize the RIS based on long-term channel statistics, e.g., the locations of the transmitter, RIS, and receiver, ignoring the small-scale fading. On the other hand, the MIMO transmitter and MIMO receiver are optimized based on full channel state information, similar to legacy systems, once the RIS is optimized. The focusing functions in \eqref{Eq_FocusingRF} and \eqref{Eq_FocusingRB} are optimal in line-of-sight channels and the RIS is optimized based only on the transmission distances. Under non-paraxial setups, the identity $\Prt^{\herm} \Prr= {\bf{I}}_K$ is, however, never exactly fulfilled, since $\Prt \ne \Prr$ in general. However, for any arbitrary matrix $\tilde{\PHI}$, the numerical computation of $\Prr$ and $\Prt$ is known to be straightforward \cite{DBLP:journals/tcom/HanZDZ22}, since, as apparent from \eqref{eq:Uf} and \eqref{eq:Vb}, it can be obtained by simply computing the SVD of the channels $\Hf$ and $\Hb$, respectively. The numerical computation of $\Prr$ and $\Prt$ can, in addition, partially compensate for any sub-optimality arising from the approximated choice for $\mathbf{F}_{\rm{RF}}$ and $\mathbf{F}_{\rm{RB}}$ in non-paraxial setups.

Based on these considerations, we propose the following approximated design for $\tilde{\PHI}$.
\begin{itemize}
\item Regardless of the channel model and the network topology, the diagonal matrix of reflection coefficients of the RIS is set to $\tilde{\PHI}=\mathbf{F}_{\rm{RB}} \mathbf{F}_{\rm{RF}}^\dag $, with $\mathbf{F}_{\rm{RF}}$ and $\mathbf{F}_{\rm{RB}}$ given in  \eqref{Eq_FocusingRF} and \eqref{Eq_FocusingRB}, respectively. 

\item Given $\tilde{\PHI}$, the precoding and decoding matrices at the MIMO transmitter and MIMO receiver, respectively, are obtained by applying the SVD to the end-to-end transmitter-RIS-receiver channel $\HH=\Hb \tilde \PHI \Hf = \Hb \mathbf{F}_{\rm{RB}} \mathbf{F}_{\rm{RF}}^\dag \Hf$ \cite{DBLP:journals/ett/Telatar99}.
\end{itemize}

We expect that the proposed approximated design for $\tilde{\PHI}$ is sufficiently accurate in line-of-sight and low-scattering channels, e.g., for application in the millimeter-wave, sub-terahertz, and terahertz frequency bands, and the location of the RIS is optimized in order to maximize the end-to-end performance given the locations of the MIMO transmitter and MIMO receiver. The optimality, from the end-to-end channel capacity standpoint, of the proposed approximated diagonal design is analyzed, with the aid of extensive numerical simulations, in the next section. Specifically, the rate obtained with the proposed design is compared against currently available numerical methods that jointly optimize the transmitter, the diagonal RIS, and the receiver \cite{DBLP:journals/tcom/AbrardoDR21}, as well as the EMO (non-diagonal RIS) based on Proposition 1.

\begin{figure}[t!]
     \centering
         \includegraphics[width=\columnwidth]{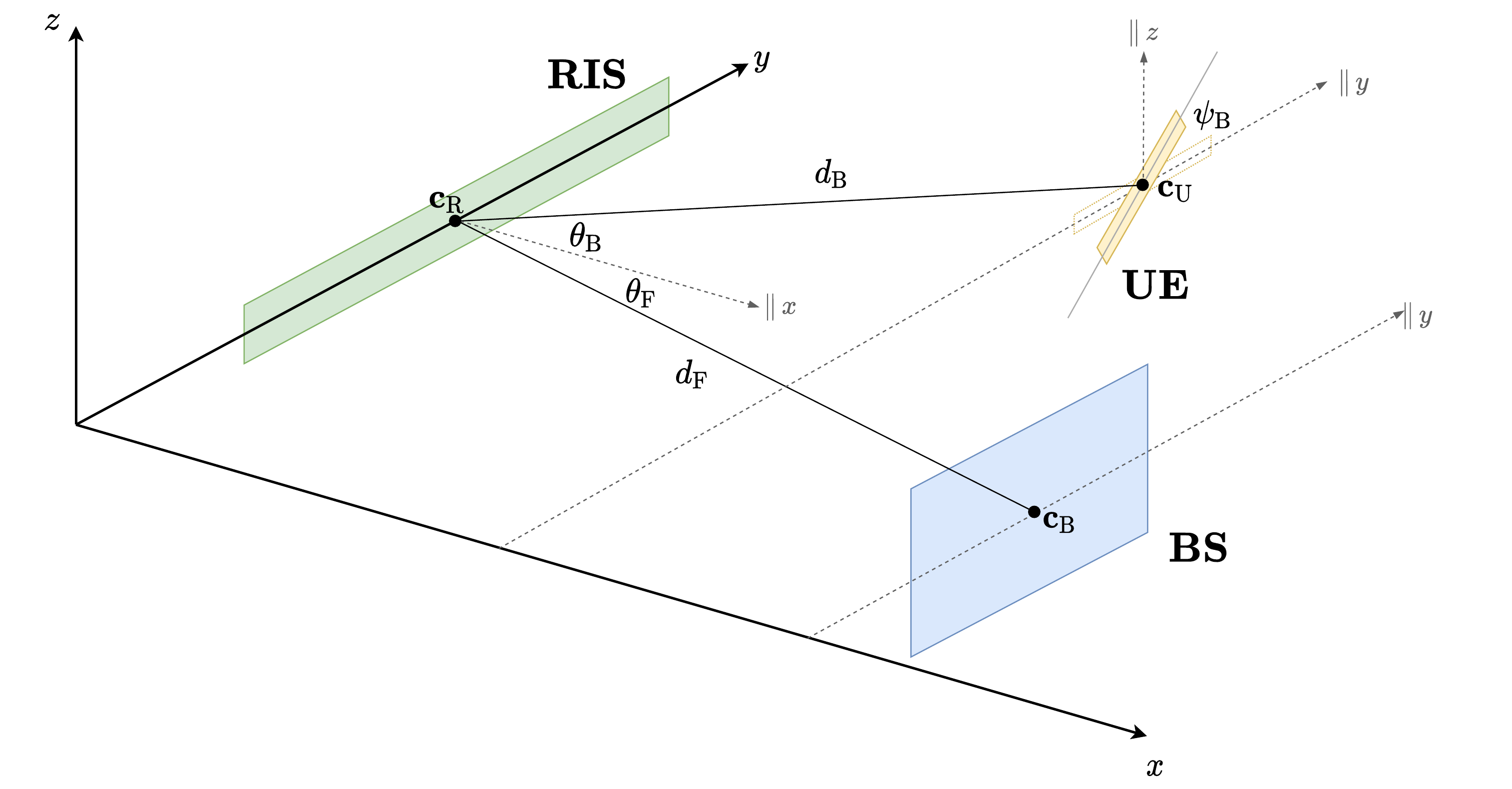}
         \caption{Considered network topology.}
         \label{fig:scenario_genaral}
\end{figure}
\section{Numerical Results}
\label{Sec:Results}

In this section, we analyze the effectiveness of the proposed closed-form expression for diagonal RIS with the aid of numerical simulations. Specifically, we compare the rate obtained by three schemes.
\begin{itemize}

\item The EMO in Proposition 1, which is the optimal capacity-achieving non-diagonal design for RIS that is obtained as the solution of \eqref{eq:MutualInf}. The capacity of this scheme is given by \eqref{Eq_C} without the need of solving any non-convex optimization problems. Only the SVD of the two channels $\Hf$ and $\Hb$ needs to be computed. This scheme is denoted by ``Non-diagonal RIS'' (ND-RIS) in the figures.

\item The conventional joint optimization of the covariance matrix of the transmitted symbols and the diagonal matrix of the RIS, by numerically solving the following non-convex optimization problem
\begin{IEEEeqnarray}{rCl} 
\mathop {\max }\limits_{\scriptstyle{\bf{Q}} \succ 0,{\rm{tr}}\left( {\bf{Q}} \right) \le P\atop
\scriptstyle\left| {{\bf{\Phi }}\left( {k,k} \right)} \right| = 1,\angle {\bf{\Phi }}\left( {k,k} \right) \in \left[ {0,2\pi } \right),{\bf{\Phi }}\left( {k,h \ne k} \right) = 0} \log \det \left( {\bf{I}}_M + \left({\bf{\Hb\PHI\Hf}}\right){\bf{Q}}\left({\bf{\Hb\PHI\Hf}} \right)^\dag\right)
\end{IEEEeqnarray}
where $\left| a \right|$ and $\angle a $ denote the absolute value and the phase of $a$, respectively. In this case, the algorithm in \cite{DBLP:journals/tcom/AbrardoDR21} is utilized. This scheme is denoted by ``Diagonal RIS - Numerical'' (D-RIS-NUM) in the figures.

\item The proposed approximated design for RIS, according to which the diagonal matrix of the RIS reflection coefficients is $\PHI= \tilde{\PHI}=\mathbf{F}_{\rm{RB}} \mathbf{F}_{\rm{RF}}^\dag $ with $\mathbf{F}_{\rm{RF}}$ and $\mathbf{F}_{\rm{RB}}$ given in \eqref{Eq_FocusingRF} and \eqref{Eq_FocusingRB}, respectively, and the covariance matrix of the transmitted symbols is the solution of the following convex optimization problem
\begin{IEEEeqnarray}{rCl} 
\mathop {\max }\limits_{\scriptstyle{\bf{Q}} \succ 0,{\rm{tr}}\left( {\bf{Q}} \right) \le P\atop} \log \det \left( {\bf{I}}_M + \left({\bf{\Hb \left(\mathbf{F}_{\rm{RB}} \mathbf{F}_{\rm{RF}}^\dag\right) \Hf}}\right){\bf{Q}}\left({\bf{\Hb \left(\mathbf{F}_{\rm{RB}} \mathbf{F}_{\rm{RF}}^\dag\right) \Hf}} \right)^\dag\right)
\end{IEEEeqnarray}
whose solution is known to be the water-filling power allocation applied to the end-to-end channel, considering $\PHI= \tilde{\PHI}=\mathbf{F}_{\rm{RB}} \mathbf{F}_{\rm{RF}}^\dag $ fixed \cite{DBLP:journals/ett/Telatar99}. This scheme is denoted by ``Diagonal RIS - Proposed'' (D-RIS-FOC) in the figures. In particular, the focusing functions in \eqref{Eq_FocusingRF} and \eqref{Eq_FocusingRB} assume, unless stated otherwise, that the focusing points are the mid-points of the MIMO transmitter and MIMO receiver, respectively. To analyze the sensitivity of the proposed solution, we analyze the impact of considering focusing points different from the mid-points of the MIMO transmitter and MIMO receiver as well. For generality, we use the notation $\PHI= \tilde{\PHI}=\mathbf{F}_{\rm{RB}}\left(\bf{p}_{\rm{RB}}\right) \mathbf{F}_{\rm{RF}}\left(\bf{p}_{\rm{RF}}\right)^\dag $, where $\bf{p}_{\rm{RF}}$ is the focusing point at the MIMO transmitter and $\bf{p}_{\rm{RB}}$ is the focusing point the MIMO receiver.
\end{itemize}

The considered network topology is illustrated in Fig. \ref{fig:scenario_genaral}. The mid-points of the MIMO transmitter, RIS, and MIMO receiver are denoted by $\mathbf{c}_{\rm B}$, $\mathbf{c}_{\rm R}$, $\mathbf{c}_{\rm U}$, respectively, and they lay on the $z=0$ plane. The distances between the mid-points of the MIMO transmitter and the RIS, and the mid-points of the RIS and the MIMO receiver are denoted by $d_{\rm{F}}$ and $d_{\rm{B}}$, respectively. The angles between the lines connecting the mid-points of the MIMO transmitter and the RIS, and the mid-points of the RIS and the MIMO receiver and the normal to the RIS are denoted by $\theta_{\rm F}$ and $\theta_{\rm B}$, respectively. Also, the angles between the MIMO transmitter and the $z$-axis and between the MIMO receiver and the $z$-axis are denoted by $\psi_{\rm F}$ and $\psi_{\rm B}$, respectively. The RIS is always kept parallel to the $yz$ plane. Therefore, the MIMO transmitter, RIS, and MIMO receiver are parallel to each other if $\psi_{\rm F}=\psi_{\rm B}=0$.

To validate the proposed design, the line-of-sight (free-space) channel model in \eqref{eq:FChannel} and \eqref{eq:BChannel} is utilized. Unless stated otherwise, the numerical results are obtained by assuming a carrier frequency of 28 GHz. The MIMO transmitter comprises $N=32$ antennas (8 along the $y$-axis and 4 along the $z$ axis), the RIS comprises $K=1024$ elements (512 along the $y$-axis and 2 along the $z$ axis), and the MIMO receiver comprises $M=16$ antennas (16 along the $y$-axis and 1 along the $z$ axis). The noise variance is set to $\sigma_n^2=-97~\deci\bel\milli$ and the total transmit power is set to $P=0~\deci\bel\milli$. The inter-distance of the antenna elements at the MIMO transmitter, MIMO receiver, and at the RIS is equal to half-wavelength.

\begin{figure}[t!]
     \centering
         \includegraphics[width=\columnwidth]{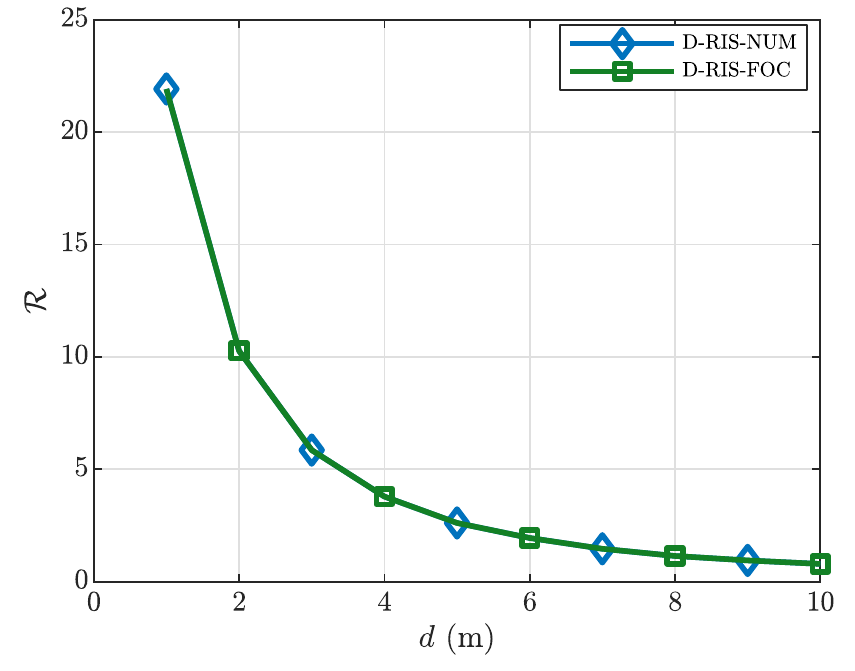}
         \caption{Achievable rate in the paraxial deployment -- Comparison between the ``Diagonal RIS - Numerical'' (D-RIS-NUN) and ``Diagonal RIS - Proposed'' (D-RIS-FOC) schemes. Setup: $d_{\rm{F}}=d_{\rm{B}}=d$, $\theta_{\rm F}=0$, $\theta_{\rm B}=\pi$, and $\psi_{\rm F}=\psi_{\rm B}=0$.}
         \label{fig:paraxial_performance}
\end{figure}
\begin{figure}[t!]
     \centering
         \includegraphics[width=\columnwidth]{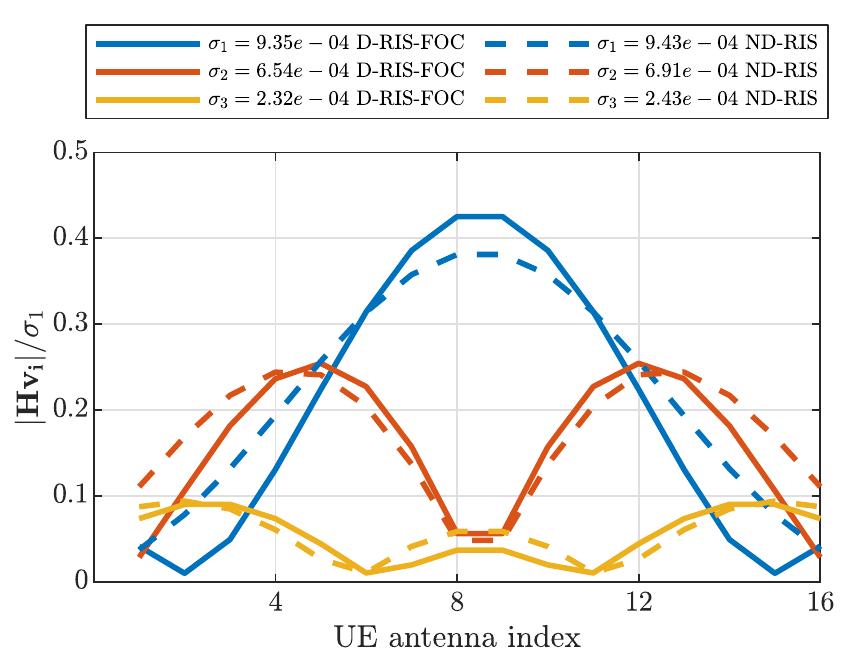}
         \caption{Achievable rate in the paraxial deployment -- Beam projection, normalized by the highest singular value ($\sigma_1=\sigma_{{\rm B}1}\sigma_{{\rm F}1}$), on the linear antenna array of the MIMO receiver: Comparison between the ``Non-diagonal RIS'' (ND-RIS) and ``Diagonal RIS - Proposed'' (D-RIS-FOC) schemes. Setup: $d_{\rm{F}}=d_{\rm{B}}=d = 7$ meter, $\theta_{\rm F}=0$, $\theta_{\rm B}=\pi$, and $\psi_{\rm F}=\psi_{\rm B}=0$.}
         \label{fig:paraxial_beams}
\end{figure}
\subsection{Paraxial Setup}
In this sub-section, we compare the achievable rate between the ``Diagonal RIS - Numerical'' (D-RIS-NUM) and ``Diagonal RIS - Proposed'' (D-RIS-FOC) schemes in the paraxial setup, similar to the deployment in Fig. \ref{fig:Scheme}. This is obtained by setting $d_{\rm{F}}=d_{\rm{B}}=d$, $\theta_{\rm F}=0$, $\theta_{\rm B}=\pi$, and $\psi_{\rm F}=\psi_{\rm B}=0$. As for the focusing centers, we set $\bf{p}_{\rm{RB}} = \mathbf{c}_{\rm B}$ and $\bf{p}_{\rm{RF}} = \mathbf{c}_{\rm U}$, according to Section IV-B. 

The results in Fig. \ref{fig:paraxial_performance} confirm the effectiveness of the proposed approach, according to the theoretical analysis of this case study. In Fig. \ref{fig:paraxial_beams}, we consider a similar setup with $d=7$ meters, and compare the ``Non-diagonal RIS'' (ND-RIS) and ``Diagonal RIS - Proposed'' (D-RIS-FOC) schemes in terms of projected beams. For both schemes, specifically, we compute the matrix $\Vf \in \mathbb{C}^{\kt\times \kt}$ from the corresponding SVD, and we then compute the projections $\|\mathbf{H}\mathbf{v}_{{\rm F}i}\|$ with $\mathbf{v}_{{\rm F}i}$ being the $i$th vector in $\Vf$. This projection represents how the energy of the transmitted signal is distributed across the MIMO receiver. The profiles of the obtained beams closely match with each other. The minor differences are due to the fact that the transmission distances of all the antenna elements are not exactly the same, and the Fresnel approximation applied in \cite{Miller00} is less accurate in the considered near field setup. Therefore, the focusing functions in \eqref{Eq_FocusingRF} and \eqref{Eq_FocusingRB} constitute an approximation.

\begin{figure*}[!t]
	\centering
	\subfloat[width=\columnwidth][Mid-point of the MIMO transmitter $\mathbf{c}_{\rm B}=(2.0,13.0,0.0)^T$]{\includegraphics[width=\columnwidth]{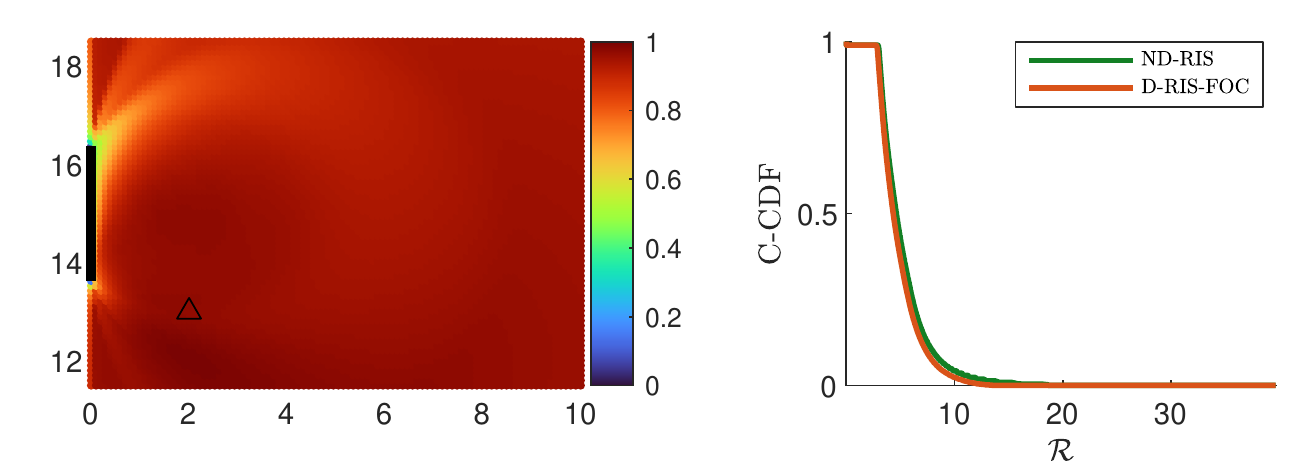}\label{SubFig:rate_map_BS_x020y130}}\\
	\subfloat[width=\columnwidth][Mid-point of the MIMO transmitter $\mathbf{c}_{\rm B}=(2.0,14.5,0.0)^T$]{\includegraphics[width=\columnwidth]{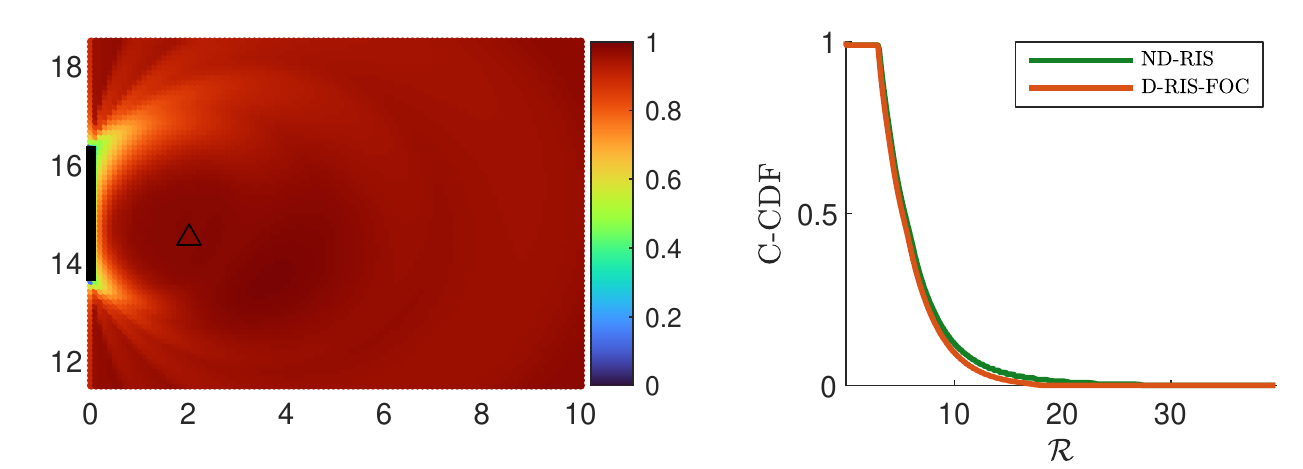}\label{SubFig:rate_map_BS_x020y145}}\\
	\subfloat[width=\columnwidth][Mid-point of the MIMO transmitter $\mathbf{c}_{\rm B}=(1.0,14.5,0.0)^T$]{\includegraphics[width=\columnwidth]{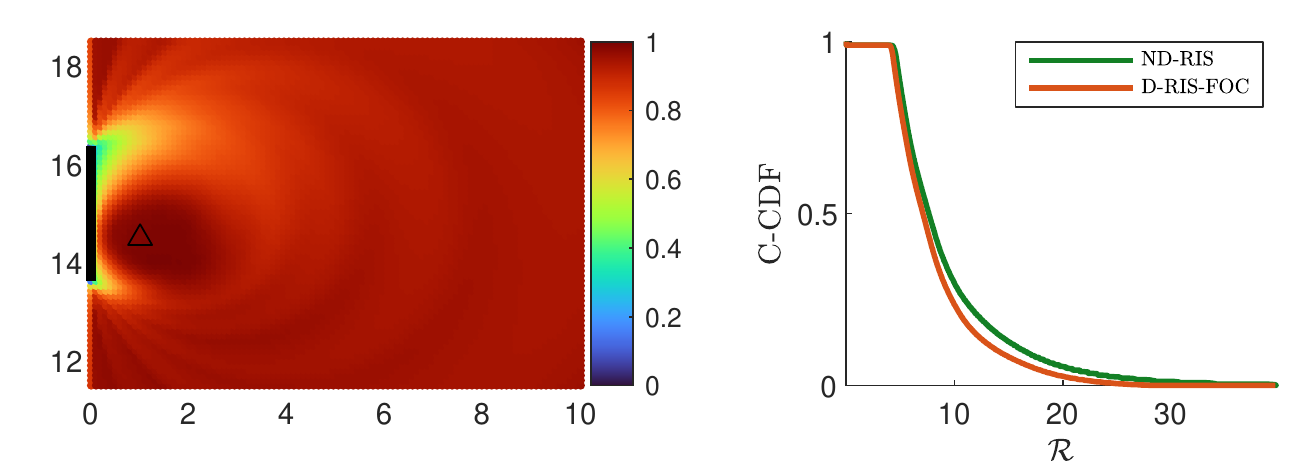}\label{SubFig:rate_map_BS_x010y145}}
		\caption{(left) Ratio of the achievable rate between the ``Non-diagonal RIS'' (ND-RIS) and the ``Diagonal RIS - Proposed'' (D-RIS-FOC) schemes as a function of the position of the MIMO receiver (a given point on the color map) for a fixed location of the MIMO transmitter. (right) Complementary cumulative distribution function (C-CDF) of the achievable rate for the ``Non-diagonal RIS'' (ND-RIS) and the ``Diagonal RIS - Proposed'' (D-RIS-FOC) schemes across the considered areas.}
	\label{MultiFig:RateMaps}
\end{figure*}
\begin{figure*}[!t]
	\centering
	\subfloat[width=\columnwidth][Mid-point of the MIMO transmitter $\mathbf{c}_{\rm B}=(2.0,13.0,0.0)^T$]{\includegraphics[width=\columnwidth]{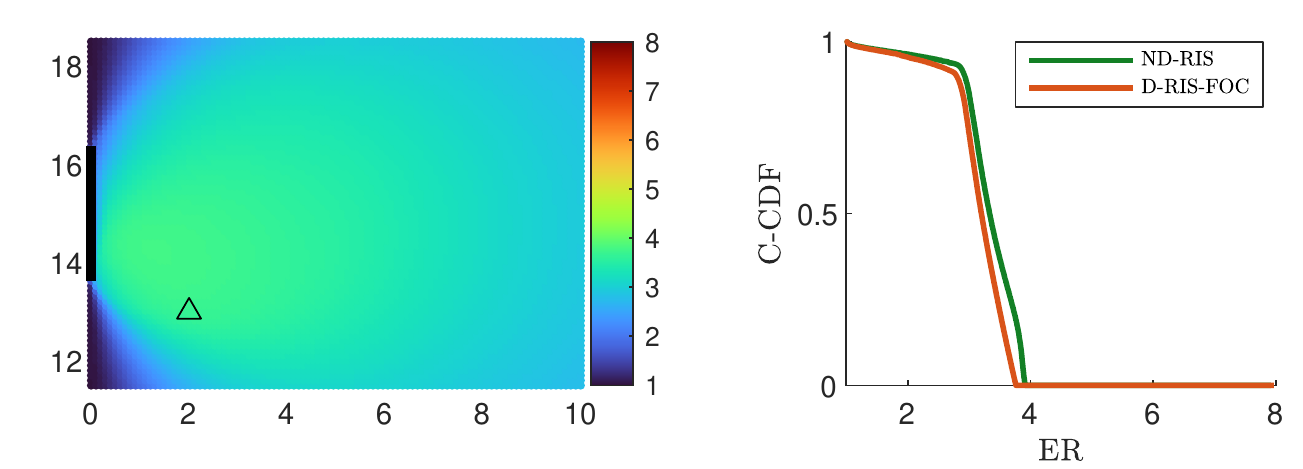}\label{SubFig:effR_map_BS_x020y130}}\\
	\subfloat[width=\columnwidth][Mid-point of the MIMO transmitter $\mathbf{c}_{\rm B}=(2.0,14.5,0.0)^T$]{\includegraphics[width=\columnwidth]{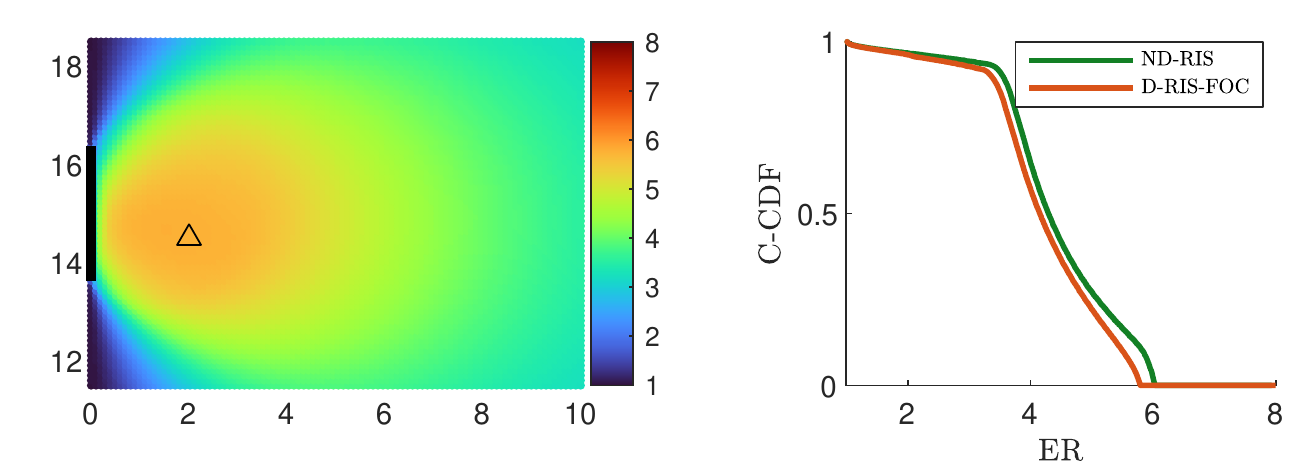}\label{SubFig:effR_map_BS_x020y145}}\\
	\subfloat[width=\columnwidth][Mid-point of the MIMO transmitter $\mathbf{c}_{\rm B}=(1.0,14.5,0.0)^T$]{\includegraphics[width=\columnwidth]{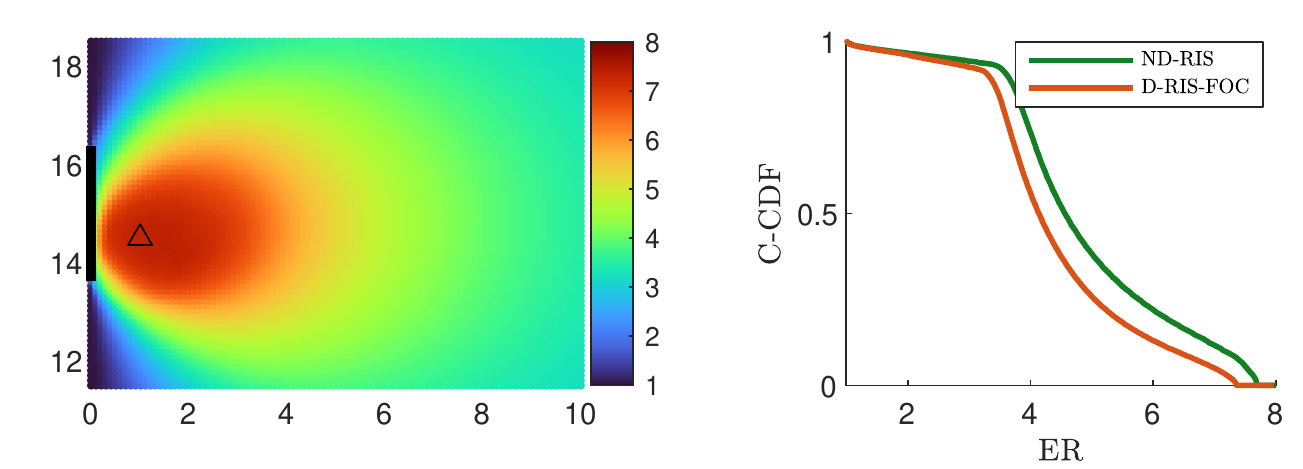}\label{SubFig:effR_map_BS_x010y145}}
		\caption{(left) Effective rank (ER) of the ``Diagonal RIS - Proposed'' (D-RIS-FOC) scheme as a function of the position of the MIMO receiver (a given point on the color map) for a fixed location of the MIMO transmitter. (right) C-CDF of the ER for the ``Non-diagonal RIS'' (ND-RIS) and the ``Diagonal RIS - Proposed'' (D-RIS-FOC) schemes across the considered areas.}
	\label{MultiFig:EffRMaps}
\end{figure*}
\afterpage{\clearpage}
\subsection{Non-Paraxial Setup}
In this sub-section, we analyze the effectiveness of the proposed approximated diagonal design for RIS in non-paraxial setups. To this end, we compute and compare the achievable rates and the effective rank \cite{DBLP:conf/eusipco/RoyV07} provided by the ``Non-diagonal RIS'' (ND-RIS) and ``Diagonal RIS - Proposed'' (D-RIS-FOC) schemes, as a function of the location of the MIMO receiver, while keeping the location of the MIMO transmitter fixed. The effective rank, in particular, provides information on the number of degrees of freedom that is achievable in the considered channel model.

In Fig. \ref{MultiFig:RateMaps}, we show the ratio of the achievable rates between the ``Diagonal RIS - Proposed'' (D-RIS-FOC) and the ``Non-diagonal RIS'' (ND-RIS) schemes. By definition, the ratio lies in the range $[0, 1]$, and the closer to one the ratio is the closer to the optimum the ``Diagonal RIS - Proposed'' (D-RIS-FOC) scheme is. The location of the MIMO transmitter is denoted by a black triangle, and the RIS is represented by a black segment. The MIMO transmitter, RIS, and MIMO receiver are assumed to be parallel to each other, i.e., $\psi_{\rm F}=\psi_{\rm B}=0$. We see that the proposed approach provides rates very close to the optimum, except for some locations that are very close to the RIS (within one meter from the RIS). In addition, we show the complementary cumulative distribution function of the achievable rates across the considered areas. The obtained curves confirm the good performance offered by the ``Diagonal RIS - Proposed'' (D-RIS-FOC) scheme with respect to the optimum. As expected, the closer to the RIS the MIMO transmitter, the better the rate \cite{DBLP:journals/corr/abs-2104-12108}.

In Fig. \ref{MultiFig:EffRMaps}, we illustrate the effective rank provided by the ``Diagonal RIS - Proposed'' (D-RIS-FOC) scheme across the considered area. We see that multiple transmission modes are available for different locations of the MIMO receiver, and the effective rank depends on the location of the MIMO transmitter. The obtained results confirm that the network topology and the locations of the MIMO transmitter, RIS, and MIMO receiver are essential parameters for ensuring multi-mode communications. Also, we compare the effective rank provided by the ``Diagonal RIS - Proposed'' (D-RIS-FOC)  scheme against the ``Non-diagonal RIS'' (ND-RIS) scheme. We note that the proposed diagonal design, given in a closed-form expression, provides an effective rank that is close to the optimum, but at a lower implementation complexity.

\begin{figure}[!t]
     \centering
         \includegraphics[width=\columnwidth]{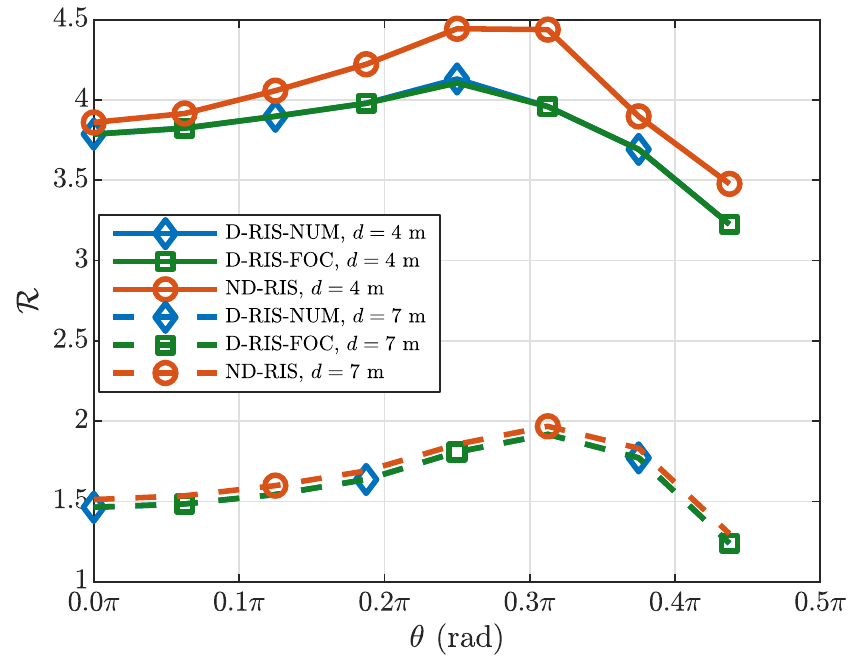}
         \caption{Comparison of the achievable rates offered by the ``Non-diagonal RIS'' (ND-RIS), ``Diagonal RIS - Numerical'' (D-RIS-NUM), and ``Diagonal RIS - Proposed'' (D-RIS-FOC) schemes as a function of the angle $\theta_{\rm F}=-\theta_{\rm B}=\theta$, while keeping the distances $d_{\rm{F}}=d_{\rm{B}}=d$ fixed (solid lines: $d=4$ meters; dashed lines: $d=7$ meters). The MIMO transmitter, RIS, and MIMO receiver are parallel to each other ($\psi_{\rm F}=\psi_{\rm B}=0$).}
         \label{fig:var_theta}
\end{figure}
In Fig. \ref{fig:var_theta}, we compare the achievable rates of the three considered schemes as a function of $\theta_{\rm F}=-\theta_{\rm B}=\theta$, while keeping the distances $d_{\rm{F}}=d_{\rm{B}}=d$ fixed. Overall, we see that the proposed approach provides rates that are virtually identical to those obtained with complex numerical algorithms and slightly worse than those obtained by using non-diagonal RIS only in some cases.

\begin{figure}[!t]
     \centering
         \includegraphics[width=\columnwidth]{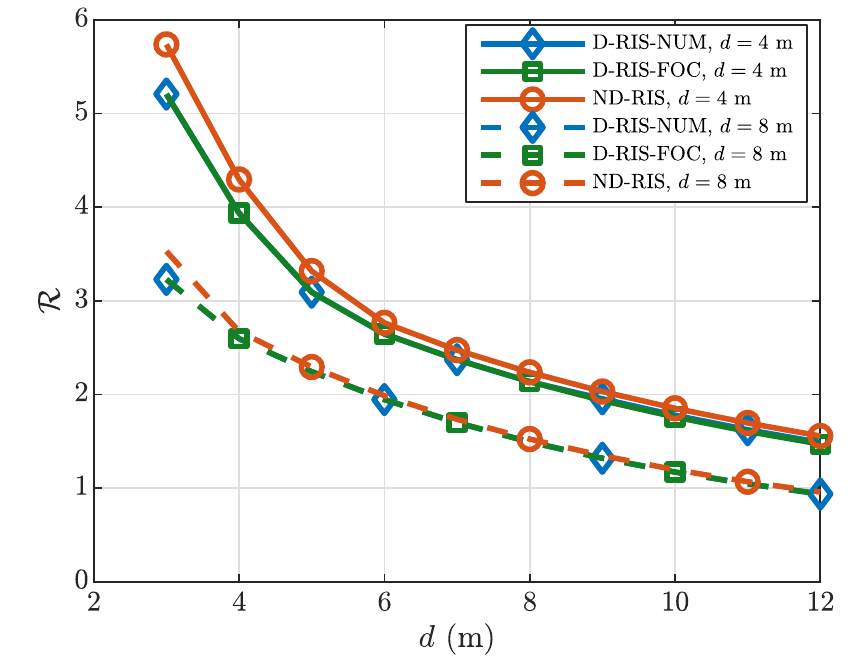}
         \caption{Comparison of the achievable rates offered by the ``Non-diagonal RIS'' (ND-RIS), ``Diagonal RIS - Numerical'' (D-RIS-NUM), and ``Diagonal RIS - Proposed'' (D-RIS-FOC) schemes as a function of the distance $d_{\rm{B}}=d$, while keeping the distance $d_{\rm{F}}$ (solid lines: $d_{\rm{F}}=4$ meters; dashed lines: $d_{\rm{F}}=8$ meters), and the angles $\theta_{\rm F}= \pi/8$ and $\theta_{\rm B} = 3\pi/8$ fixed. The MIMO transmitter, RIS, and MIMO receiver are parallel to each other ($\psi_{\rm F}=\psi_{\rm B}=0$).}
         \label{fig:var_rho}
\end{figure}
In Fig. \ref{fig:var_rho}, we compare the achievable rates offered by the considered three schemes as a function of the distance $d_{\rm{B}}=d$, while keeping the other parameters fixed. The conclusions are similar to those drawn from the analysis of Fig. \ref{fig:var_theta}.

\begin{figure*}[!t]
	\centering
	\subfloat[][$\psi_{\rm B}=-\pi/8$]{\includegraphics[width=0.33\columnwidth]{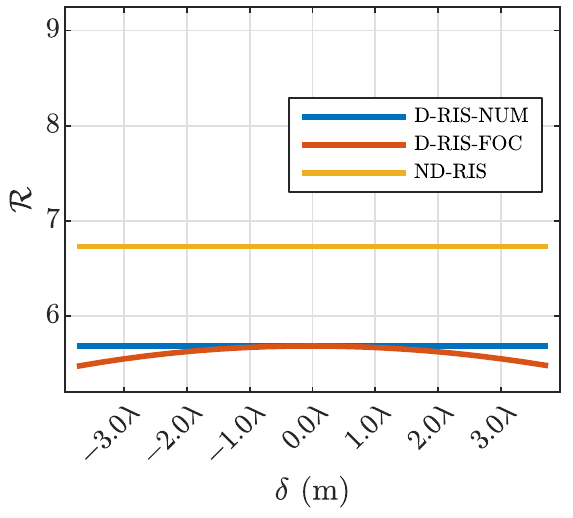}\label{SubFig:var_focusing_UE_1}}
	\subfloat[][$\psi_{\rm B}=\pi/4$]{\includegraphics[width=0.33\columnwidth]{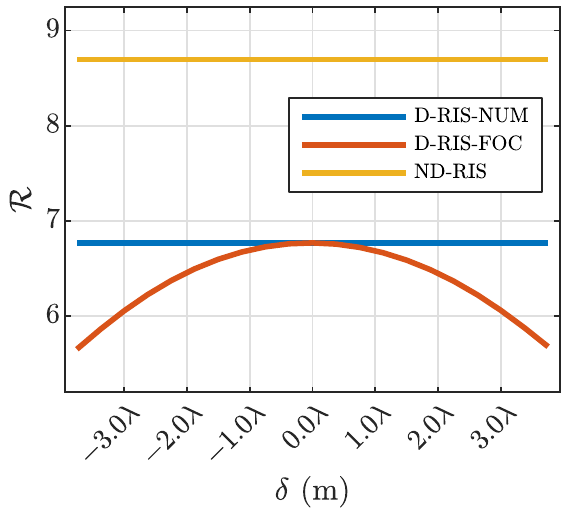}\label{SubFig:var_focusing_UE_2}}
	\subfloat[][$\psi_{\rm B}=\pi/8$]{\includegraphics[width=0.33\columnwidth]{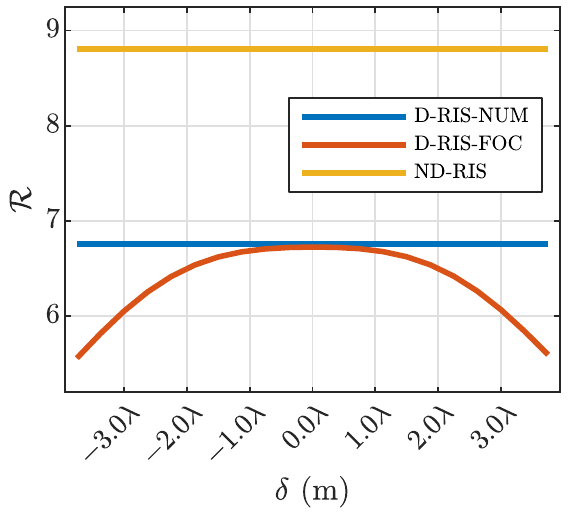}\label{SubFig:var_focusing_UE_3}}
		\caption{Comparison of the achievable rates offered by the ``Non-diagonal RIS'' (ND-RIS), ``Diagonal RIS - Numerical (D-RIS-NUM)'', and ``Diagonal RIS - Proposed (D-RIS-FOC)'' schemes as a function of the rotation angle $\psi_{\rm B}$ and the focusing point $\bf{p}_{\rm{RF}}$. Specifically, $\bf{p}_{\rm{RF}}=\delta\frac{\mathbf{u}_{M}-\mathbf{c}_{\rm U}}{\|\mathbf{u}_{M}-\mathbf{c}_{\rm U}\|}$, where $0 \le \delta \le 1$ and $\mathbf{u}_{M}$ is the $M$th element of the MIMO receiver. Setup: $\theta_{\rm F}= \pi/4$, $\theta_{\rm B}= \pi/3$, $d_{\rm{F}}=2$ meters, $d_{\rm{B}}=3.5$ meters, $\psi_{\rm F}=0$, and $\bf{p}_{\rm{RB}} = \mathbf{c}_{\rm B}$.}
	\label{MultiFig:var_focusing_UE}
\end{figure*}
In Fig. \ref{MultiFig:var_focusing_UE}, we compare the achievable rates offered by the considered schemes as a function of the rotation angle $\psi_{\rm B}$ and the focusing point $\bf{p}_{\rm{RF}}$. The main objective is to understand the impact of rotating the receiver, which results in large deviations from the typical paraxial setup, and to evaluate the impact of the focusing points at the MIMO receiver, i.e., whether choosing the mid-point as the focusing point, as dictated by the paraxial setup, is a good choice. The obtained results unveil two main performance trends: (1) Selecting the mid-point of the MIMO receiver as the focusing point is the best choice, as it provides rates that are very close to those obtained by using the ``Diagonal RIS - Numerical'' (D-RIS-NUM) scheme; and (2) rotating the MIMO receiver increasese the gap between the rates attained by the ``Diagonal RIS - Proposed'' (D-RIS-FOC) scheme and the ``Non-diagonal RIS'' (ND-RIS) scheme. In general, a non-diagonal design for RIS provides better rates in the presence of rotations.

\section{Conclusion}
\label{Sec:Conclusion}
In this paper, we have analyzed the spatial multiplexing gains of RIS-aided MIMO channels in the near field. We have proved that the best design for nearly-passive RIS results in a non-diagonal matrix of reflection coefficients. Due to the non-negligible complexity of non-diagonal designs for RIS, we have proposed a closed-form diagonal design that is motivated and is proved to be optimal, from the end-to-end channel capacity standpoint, in line-of-sight channels and when the MIMO transmitter, RIS, and MIMO receiver are deployed according to the paraxial setup. In different network typologies and over fading channels, the proposed design is sub-optimal. However, extensive simulation results in line-of-sight (free-space) channels have confirmed that it provides good performance in non-paraxial setups as well. Specifically, we have shown that the proposed diagonal design provides rates that are close to those obtained by numerically solving non-convex optimization problems at a high computational complexity, as well as to those attained, in several considered network setups, by capacity-achieving non-diagonal RIS designs.

\appendices

\section{Proof of Proposition 1}\label{App:Capacity}
In this appendix, we provide a formal proof for Proposition 1. The proposed approach generalizes, to RIS-aided MIMO channels, the proof originally given in \cite{DBLP:journals/ett/Telatar99} for MIMO channels. The proof leverages the assumption that the matrix of reflection coefficients of the RIS is unitary but is not necessarily diagonal.

By definition, assuming that the channels are fixed and given, the mutual information is
\begin{IEEEeqnarray}{rCl}
    I(\B{x},\B{y}) & = & \log\det\left(\B{I}_M+\Hphi\B{Q}\Hphi^{\herm}\right) \\ & = & \log\det\left(\B{I}_M+\B{Q}\Hphi^{\herm}\Hphi\right)
\end{IEEEeqnarray}
where the last equality follows by using the identity $\det \left( {{\bf{I}} + {\bf{AB}}} \right) = \det \left( {{\bf{I}} + {\bf{BA}}} \right)$.

The channel product $\Hphi^{\herm}\Hphi$ can be expressed as
\begin{IEEEeqnarray}{rCl}
    \Hphi^{\herm}\Hphi&=&(\Hb\PHI\Hf)^{\herm}(\Hb\PHI\Hf)\nonumber\\
    &=&\left(\Hf^{\herm}\PHI^{\herm}\Hb^{\herm}\right)\left(\Hb\PHI\Hf\right)\nonumber\\
    &=&\left(\Uf\Sf\Vf^{\herm}\right)^{\herm}\PHI^{\herm}\left(\Vb\Sb^{\herm}\Sb\Vb^{\herm}\right)\PHI\left(\Uf\Sf\Vf^{\herm}\right)
\end{IEEEeqnarray}

Similarly to \cite{DBLP:journals/twc/TangH07}, without loss of generality, we can express the matrix $\PHI$ as
\begin{equation}
    \PHI = \Vb\B{X}\Uf^{\herm}
\end{equation}
with $\B{X}$ being a generic unitary matrix, not necessarily diagonal.

Therefore, the mutual information can be expressed as
\begin{IEEEeqnarray}{rCl}
    I(\B{x},\B{y})&=&\log\det\left(\B{I}_M+\B{Q}\Vf\Sf^{\herm}\B{X}^{\herm}\Sb^{\herm}\Sb\B{X}\Sf\Vf^{\herm}\right)\\
    &=&\log\det\left(\B{I}_M+\Vf^{\herm}\B{Q}\Vf\Sf^{\herm}\B{X}^{\herm}\Sb^{\herm}\Sb\B{X}\Sf\right)\\
    &=&\log\det\left(\B{I}_M+\T{\B{Q}}\T{\B{X}}\right)
\end{IEEEeqnarray}
where the last equality is obtained by defining
\begin{IEEEeqnarray}{l}
    \T{\B{Q}}=\Vf^{\herm}\B{Q}\Vf    \label{eq:Q_tilde}\\
    \T{\B{X}}=\Sf\B{X}^{\herm}\Sb\Sb\B{X}\Sf
\end{IEEEeqnarray}

Let us focus on the matrix $\T{\B{Q}}$. Since $\Vf$ is unitary, we have $\operatorname{tr}(\T{\B{Q}})=\operatorname{tr}(\B{Q})$. Therefore, maximizing over $\B{Q}$ is equivalent to maximizing over $\T{\B{Q}}$, without changing any constraints of the original optimization problem, since $\T{\B{Q}}$ is positive semi-definite if $\B{Q}$ is positive semi-definite. This is proven in the following lemma.
\begin{lemma}\label{lemma:Q_tilde}
Matrix $\T{\B{Q}}$ is positive semi-definite.
\end{lemma}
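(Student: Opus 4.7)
The plan is to show that $\tilde{\mathbf{Q}}$ inherits positive semi-definiteness from $\mathbf{Q}$ by recognizing the map $\mathbf{Q}\mapsto \mathbf{V}_{\text{F}}^{\dag}\mathbf{Q}\mathbf{V}_{\text{F}}$ as a Hermitian congruence transformation, which preserves the PSD cone. I would first verify the Hermitian property and then check the nonnegativity of the quadratic form through a simple change of variables.

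First, I would confirm that $\tilde{\mathbf{Q}}$ is Hermitian. Since $\mathbf{Q}$ is a covariance matrix it satisfies $\mathbf{Q}^{\dag}=\mathbf{Q}$, and using $(\mathbf{A}\mathbf{B}\mathbf{C})^{\dag}=\mathbf{C}^{\dag}\mathbf{B}^{\dag}\mathbf{A}^{\dag}$, a one-line computation gives $\tilde{\mathbf{Q}}^{\dag}=(\mathbf{V}_{\text{F}}^{\dag}\mathbf{Q}\mathbf{V}_{\text{F}})^{\dag}=\mathbf{V}_{\text{F}}^{\dag}\mathbf{Q}^{\dag}\mathbf{V}_{\text{F}}=\mathbf{V}_{\text{F}}^{\dag}\mathbf{Q}\mathbf{V}_{\text{F}}=\tilde{\mathbf{Q}}$. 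Note that unitarity of $\mathbf{V}_{\text{F}}$ is not even required at this step, only that $\mathbf{Q}$ is Hermitian.

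Next, I would establish the nonnegativity condition. For an arbitrary vector $\mathbf{z}\in\mathbb{C}^{N}$, define $\mathbf{w}\triangleq\mathbf{V}_{\text{F}}\mathbf{z}$. Then
\begin{equation}
\mathbf{z}^{\dag}\tilde{\mathbf{Q}}\mathbf{z}=\mathbf{z}^{\dag}\mathbf{V}_{\text{F}}^{\dag}\mathbf{Q}\mathbf{V}_{\text{F}}\mathbf{z}=(\mathbf{V}_{\text{F}}\mathbf{z})^{\dag}\mathbf{Q}(\mathbf{V}_{\text{F}}\mathbf{z})=\mathbf{w}^{\dag}\mathbf{Q}\mathbf{w}\geq 0,
\end{equation}
where the final inequality follows directly from the hypothesis $\mathbf{Q}\succeq 0$. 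Since $\mathbf{z}$ was arbitrary, this proves $\tilde{\mathbf{Q}}\succeq 0$.

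There is essentially no obstacle in this argument: the result is a textbook consequence of the fact that Hermitian congruence preserves the PSD cone. The only subtlety worth remarking on in the text is that unitarity of $\mathbf{V}_{\text{F}}$ is not actually used for the PSD preservation itself, but it is what is subsequently invoked (in the main proof of Proposition~1) to ensure that the trace constraint $\operatorname{tr}(\tilde{\mathbf{Q}})=\operatorname{tr}(\mathbf{Q})\leq P$ is preserved, so that the reparametrization $\mathbf{Q}\leftrightarrow\tilde{\mathbf{Q}}$ leaves the feasible set of the capacity optimization unchanged.
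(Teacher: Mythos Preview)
Your proof is correct and follows essentially the same route as the paper: both verify Hermitianity via $(\mathbf{V}_{\text{F}}^{\dag}\mathbf{Q}\mathbf{V}_{\text{F}})^{\dag}=\mathbf{V}_{\text{F}}^{\dag}\mathbf{Q}\mathbf{V}_{\text{F}}$ and then establish nonnegativity of the quadratic form by the substitution $\mathbf{w}=\mathbf{V}_{\text{F}}\mathbf{z}$ and the hypothesis $\mathbf{Q}\succeq 0$. Your additional remark that unitarity of $\mathbf{V}_{\text{F}}$ is only needed to preserve the trace constraint, not the PSD property, is a nice clarification not made explicit in the paper.
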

\begin{proof}

This lemma can be proven by observing the following.
\begin{itemize}
    \item $\T{\B{Q}}$ is an Hermitian matrix, since 
    \begin{equation}
    \T{\B{Q}}^{\herm}=(\Vf^{\herm}\B{Q}\Vf)^{\herm}=\Vf^{\herm}\B{Q}\Vf=\T{\B{Q}}
\end{equation}
    
    \item As $\T{\B{Q}}$ is an Hermitian matrix, it is positive semi-definite if $\B{z}^{\herm}\T{\B{Q}}\B{z}\geq0$ for any complex vector $\B{z}$. This follows because 
        \begin{equation}
    \B{z}^{\herm}\T{\B{Q}}\B{z}=\B{z}^{\herm}\Vf^{\herm}\B{Q}\Vf\B{z}=\T{\B{z}}^{\herm}\B{Q}\T{\B{z}} \ge 0
\end{equation}
where the last inequality follows because $\B{Q}$ is positive semi-definite and $\T{\B{x}}=\Vf\B{z}$ is a generic complex vector.

\end{itemize}

This completes the proof.
\end{proof}

The mutual information can be rewritten as
\begin{IEEEeqnarray}{rCl}
    I(\B{x},\B{y})&=&\log\det\left(\B{I}_M+\Sf\Vf^{\herm}\B{Q}\Vf\Sf^{\herm}\B{X}^{\herm}\Sb^{\herm}\Sb\B{X}\right)\nonumber\\
    &=&\log\det\left(\B{I}_M+\D{\B{Q}}\,\D{\B{X}}\right)
\end{IEEEeqnarray}
where the last equality is obtained by defining
\begin{IEEEeqnarray}{l}
    \D{\B{Q}}=\Sf\Vf^{\herm}\B{Q}\Vf\Sf^{\herm}=\Sf\T{\B{Q}}\Sf^{\herm}\label{eq:dashed_Q}\\
    \D{\B{X}}=\B{X}^{\herm}\Sb^{\herm}\Sb\B{X}
\end{IEEEeqnarray}

Let us consider the matrix $\D{\B{X}}$. It can rewritten as
\begin{IEEEeqnarray}{rCl}
    \D{\B{X}}&=&\B{X}^{\herm}\T{\B{\Sigma}}_{\B{B}}\B{X}\nonumber\\
    &=&\left(\T{\B{\Sigma}}_{\B{B}}^{1/2}\B{X}\right)^{\herm}\left(\T{\B{\Sigma}}_{\B{B}}^{1/2}\B{X}\right)
\end{IEEEeqnarray}
where $\T{\B{\Sigma}}_{\B{B}} = \Sb^{\herm}\Sb$ is a square diagonal matrix with non-negative real entries on its diagonal. The last equality follows because the matrix $\T{\B{\Sigma}}_{\B{B}}$ is invertible and can written as $\T{\B{\Sigma}}_{\B{B}} ={\B{\Sigma}}_{\B{B}}^{1/2} {\B{\Sigma}}_{\B{B}}^{1/2}$. Also, the last equality implies that $\D{\B{X}}$ is a positive semi-definite matrix for any choice of $\B{X}$.

Accordingly, $\D{\B{X}}$ can be expressed in terms of its eigenvalue decomposition as $\D{\B{X}}=\B{U}_{\D{\B{X}}}\B{\Lambda}_{\D{\B{X}}}\B{U}_{\D{\B{X}}}^{\herm}$. As a result, the mutual information can be rewritten as
\begin{IEEEeqnarray}{rCl}
    I(\B{x},\B{y})&=&\log\det\left(\B{I}_M+
    \D{\B{Q}}
    \left(\B{U}_{\D{\B{X}}}\B{\Lambda}_{\D{\B{X}}}\B{U}_{\D{\B{X}}}^{\herm}\right)
    \right)\nonumber\\
    &=&\log\det\left(\B{I}_M+
    \B{\Lambda}_{\D{\B{X}}}^{1/2}\B{U}_{\D{\B{X}}}^{\herm}\D{\B{Q}}
    \B{U}_{\D{\B{X}}}\B{\Lambda}_{\D{\B{X}}}^{1/2}
    \right) \label{eq:I_psd}
\end{IEEEeqnarray}

From \eqref{eq:dashed_Q} and \eqref{eq:I_psd}, the following considerations can be made.
\begin{itemize}
    \item The matrix $\D{\B{Q}}=\Sf\T{\B{Q}}\Sf^{\herm}$ is positive semi-definite, since $\T{\B{Q}}$ is positive semi-definite. The proof is the same as for Lemma~\ref{lemma:Q_tilde}.
    
    \item The matrix $\B{U}_{\D{\B{X}}}^{\herm}\D{\B{Q}}\B{U}_{\D{\B{X}}}$ is positive semi-definite, since $\D{\B{Q}}$ is positive semi-definite. The proof is the same as for Lemma~\ref{lemma:Q_tilde}.
    
    \item The matrix $\B{\Lambda}_{\D{\B{X}}}^{1/2}\B{U}_{\D{\B{X}}}^{\herm}\D{\B{Q}}\B{U}_{\D{\B{X}}}\B{\Lambda}_{\D{\B{X}}}^{1/2}$ is positive semi-definite, since $\B{U}_{\D{\B{X}}}^{\herm}\D{\B{Q}}\B{U}_{\D{\B{X}}}$ is positive semi-definite. The proof is the same as for Lemma~\ref{lemma:Q_tilde}.
\end{itemize}

Therefore, the matrix $\B{I}_M+\B{\Lambda}_{\D{\B{X}}}^{1/2}\B{U}_{\D{\B{X}}}^{\herm}\D{\B{Q}}\B{U}_{\D{\B{X}}}\B{\Lambda}_{\D{\B{X}}}^{1/2}$ is positive semi-definite as well. From \cite{DBLP:journals/ett/Telatar99}, it is known that, for any generic positive semi-definite matrix $\B{A}$ (therein referred to as non-negative defined), we have
\begin{equation}
    \det\left(\B{A}\right) \leq \prod_i \B{A}\left(i,i\right) =  \prod_i \left(\B{A}\right)_{i,i} \label{eq:UpperBound}
\end{equation}
where the notation $\B{A}\left(i,i\right)=\left(\B{A}\right)_{i,i}$ is employed to simplify the writing.

Let us consider the formulation of the mutual information in (33), which is the most convenient one for further analysis. By using the upper-bound in \eqref{eq:UpperBound}, we obtain
\begin{IEEEeqnarray}{rCl}
    I(\B{x},\B{y})&=&\log\det\left(\B{I}_M+\T{\B{Q}}\T{\B{X}}\right)\nonumber\\
    &\leq&\log\prod_i\left(\B{I}_M+\T{\B{Q}}\T{\B{X}}\right)_{ii}
\end{IEEEeqnarray}

Based on \cite{DBLP:journals/ett/Telatar99}, a sufficient condition to attain the upper-bound, i.e., (44) is fulfilled with equality, is that $\T{\B{X}}$ and $\T{\B{Q}}$ are two diagonal matrices. Under these assumptions, we obtain
\begin{IEEEeqnarray}{rCl}
    I(\B{x},\B{y})&=&\log\prod_i\left(\B{I}_M+\T{\B{Q}}\T{\B{X}}\right)_{ii}\nonumber\\
    &=&\log\prod_i\left(\B{I}_M+\T{\B{Q}}\Sf\B{X}^{\herm}\Sb\Sb\B{X}\Sf\right)_{ii}\nonumber\\
    &=&\log\prod_i\left(1+\T{\B{Q}}\left(i,i\right)\Sfi\B{X}^{\herm}\left(i,i\right)\Sbi\Sbi\B{X}\left(i,i\right)\Sfi\right)\nonumber\\
    &=&\log\prod_i\left(1+\Sbi^2\Sfi^2\T{\B{Q}}\left(i,i\right)\right)\label{eq:DiagChannel}
\end{IEEEeqnarray}
where the last equality follows because $\B{X}^{\herm}\left(i,i\right)\B{X}\left(i,i\right)=1$, since $\B{X}$ is a unitary and diagonal if $\T{\B{X}}$ is diagonal for achieving the upper-bound.

In summary, we have proved that $\T{\B{Q}}=\Vf^{\herm}\B{Q}\Vf$ and that $\PHI = \Vb\B{X}\Uf^{\herm}$, for any unitary and diagonal matrix $\B{X}$, is capacity-achieving. Without loss of generality, therefore, we can consider $\B{X} = {\B{I}}_K$. The proof follows by noting that the capacity is achieved by applying the water-filling power allocation to the ordered product of the singular values of the transmitter-RIS and RIS-receiver channels.

\bibliographystyle{IEEEtran}
\bibliography{IEEEabrv,biblio}

\begin{thebibliography}{10}
\providecommand{\url}[1]{#1}
\csname url@samestyle\endcsname
\providecommand{\newblock}{\relax}
\providecommand{\bibinfo}[2]{#2}
\providecommand{\BIBentrySTDinterwordspacing}{\spaceskip=0pt\relax}
\providecommand{\BIBentryALTinterwordstretchfactor}{4}
\providecommand{\BIBentryALTinterwordspacing}{\spaceskip=\fontdimen2\font plus
\BIBentryALTinterwordstretchfactor\fontdimen3\font minus
  \fontdimen4\font\relax}
\providecommand{\BIBforeignlanguage}[2]{{%
\expandafter\ifx\csname l@#1\endcsname\relax
\typeout{** WARNING: IEEEtran.bst: No hyphenation pattern has been}%
\typeout{** loaded for the language `#1'. Using the pattern for}%
\typeout{** the default language instead.}%
\else
\language=\csname l@#1\endcsname
\fi
#2}}
\providecommand{\BIBdecl}{\relax}
\BIBdecl

\bibitem{iown}
``{AI}-integrated communications use case release-1, innovative optical
  wireless network global forum,''
  \url{https://iowngf.org/wp-content/uploads/formidable/21/IOWN-GF-RD-AIC_Use_Case_1.0.pdf},
  2021.

\bibitem{ISG-THz}
``European telecommunications standards institute, industry specification group
  on terahertz,'' \url{https://www.etsi.org/committee/2124-thz}, 2021.

\bibitem{THZ-Survey-COMAG}
Z.~Chen, C.~Han, Y.~Wu, L.~Li, C.~Huang, Z.~Zhang, G.~Wang, and W.~Tong,
  ``Terahertz wireless communications for 2030 and beyond: A cutting-edge
  frontier,'' \emph{IEEE Communications Magazine}, vol.~59, no.~11, pp. 66--72,
  2021.

\bibitem{DBLP:journals/comsur/HanWLCAKM22}
C.~Han, Y.~Wang, Y.~Li, Y.~Chen, N.~A. Abbasi, T.~K{\"{u}}rner, and A.~F.
  Molisch, ``Terahertz wireless channels: {A} holistic survey on measurement,
  modeling, and analysis,'' \emph{{IEEE} Commun. Surv. Tutorials}, vol.~24,
  no.~3, pp. 1670--1707, 2022.

\bibitem{Can-THZ}
C.~Chaccour, M.~N. Soorki, W.~Saad, M.~Bennis, and P.~Popovski, ``Can terahertz
  provide high-rate reliable low-latency communications for wireless {VR}?''
  \emph{IEEE Internet of Things Journal}, vol.~9, no.~12, pp. 9712--9729, 2022.

\bibitem{DBLP:journals/cm/AkyildizHN18}
I.~F. Akyildiz, C.~Han, and S.~Nie, ``Combating the distance problem in the
  millimeter wave and terahertz frequency bands,'' \emph{{IEEE} Commun. Mag.},
  vol.~56, no.~6, pp. 102--108, 2018.

\bibitem{DBLP:journals/ojcs/RenzoNSDQLRPSZD20}
M.~D. Renzo, K.~Ntontin, J.~Song, F.~H. Danufane, X.~Qian, F.~I. Lazarakis,
  J.~de~Rosny, D.~Phan{-}Huy, O.~Simeone, R.~Zhang, M.~Debbah, G.~Lerosey,
  M.~Fink, S.~A. Tretyakov, and S.~Shamai, ``Reconfigurable intelligent
  surfaces vs. relaying: Differences, similarities, and performance
  comparison,'' \emph{{IEEE} Open J. Commun. Soc.}, vol.~1, pp. 798--807, 2020.

\bibitem{EC-EE}
``European commission, energy efficiency targets,''
  \url{https://energy.ec.europa.eu/topics/energy-efficiency/energy-efficiency-targets-directive-and-rules/energy-efficiency-targets_en},
  2022.

\bibitem{NC-Repeaters}
``3gpp tr 38.867 v1.0.0: Study on nr network-controlled repeaters,''
  \url{https://portal.3gpp.org/desktopmodules/Specifications/SpecificationDetails.aspx?specificationId=3988},
  2022.

\bibitem{DBLP:journals/jsac/RenzoZDAYRT20}
M.~D. Renzo, A.~Zappone, M.~Debbah, M.~Alouini, C.~Yuen, J.~de~Rosny, and S.~A.
  Tretyakov, ``Smart radio environments empowered by reconfigurable intelligent
  surfaces: How it works, state of research, and the road ahead,'' \emph{{IEEE}
  J. Sel. Areas Commun.}, vol.~38, no.~11, pp. 2450--2525, 2020.

\bibitem{DBLP:journals/cm/WuZ20}
Q.~Wu and R.~Zhang, ``Towards smart and reconfigurable environment: Intelligent
  reflecting surface aided wireless network,'' \emph{{IEEE} Commun. Mag.},
  vol.~58, no.~1, pp. 106--112, 2020.

\bibitem{DBLP:journals/wc/HuangHAZYZRD20}
C.~Huang, S.~Hu, G.~C. Alexandropoulos, A.~Zappone, C.~Yuen, R.~Zhang, M.~D.
  Renzo, and M.~Debbah, ``Holographic {MIMO} surfaces for 6g wireless networks:
  Opportunities, challenges, and trends,'' \emph{{IEEE} Wirel. Commun.},
  vol.~27, no.~5, pp. 118--125, 2020.

\bibitem{DBLP:journals/cm/DardariD21}
D.~Dardari and N.~Decarli, ``Holographic communication using intelligent
  surfaces,'' \emph{{IEEE} Commun. Mag.}, vol.~59, no.~6, pp. 35--41, 2021.

\bibitem{marco_di_renzo_2022_7430186}
\BIBentryALTinterwordspacing
\emph{{Reconfigurable Surfaces for Wireless Communications}}, Dec. 2022.
  [Online]. Available: \url{https://doi.org/10.5281/zenodo.7430186}
\BIBentrySTDinterwordspacing

\bibitem{Flamini-TAP}
R.~Flamini, D.~De~Donno, J.~Gambini, F.~Giuppi, C.~Mazzucco, A.~Milani, and
  L.~Resteghini, ``Toward a heterogeneous smart electromagnetic environment for
  millimeter-wave communications: An industrial viewpoint,'' \emph{IEEE
  Transactions on Antennas and Propagation}, vol.~70, no.~10, pp. 8898--8910,
  2022.

\bibitem{C-MIMO}
A.~Sayeed and N.~Behdad, ``Continuous aperture phased mimo: Basic theory and
  applications,'' in \emph{2010 48th Annual Allerton Conference on
  Communication, Control, and Computing (Allerton)}, 2010, pp. 1196--1203.

\bibitem{WDM-Luca}
L.~Sanguinetti, A.~A. D’Amico, and M.~Debbah, ``Wavenumber-division
  multiplexing in line-of-sight holographic mimo communications,'' \emph{IEEE
  Transactions on Wireless Communications}, pp. 1--1, 2022.

\bibitem{https://doi.org/10.48550/arxiv.2210.08616}
\BIBentryALTinterwordspacing
M.~Di~Renzo, D.~Dardari, and N.~Decarli, ``Los mimo-arrays vs. los
  mimo-surfaces,'' 2022. [Online]. Available:
  \url{https://arxiv.org/abs/2210.08616}
\BIBentrySTDinterwordspacing

\bibitem{ISG-RIS}
``European telecommunications standards institute, industry specification group
  on reconfigurable intelligent surfaces,''
  \url{https://www.etsi.org/committee/1966-ris}, 2020.

\bibitem{DBLP:conf/wcnc/BohagenOO05}
F.~B{\o}hagen, P.~Orten, and G.~E. {\O}ien, ``Construction and capacity
  analysis of high-rank line-of-sight {MIMO} channels,'' in \emph{{IEEE}
  Wireless Communications and Networking Conference, {WCNC} 2005, March 13-17,
  2005, New Orleans, Louisiana, {USA}}.\hskip 1em plus 0.5em minus 0.4em\relax
  {IEEE}, 2005, pp. 432--437.

\bibitem{DBLP:journals/cm/DoCPSLL21}
H.~Do, S.~Cho, J.~Park, H.~Song, N.~Lee, and A.~Lozano, ``Terahertz
  line-of-sight {MIMO} communication: Theory and practical challenges,''
  \emph{{IEEE} Commun. Mag.}, vol.~59, no.~3, pp. 104--109, 2021.

\bibitem{Miller00}
D.~A.~B. Miller, ``Communicating with waves between volumes: evaluating
  orthogonal spatial channels and limits on coupling strengths,'' \emph{Appl.
  Opt.}, vol.~39, no.~11, pp. 1681--1699, Apr. 2000.

\bibitem{Anna-Thaning-2003}
A.~Thaning, P.~Martinsson, M.~Karelin, and A.~T. Friberg, ``Limits of
  diffractive optics by communication modes,'' \emph{Journal of Optics A: Pure
  and Applied Optics}, vol.~5, no.~3, p. 153, Mar. 2003.

\bibitem{DBLP:journals/jsac/Dardari20}
D.~Dardari, ``Communicating with large intelligent surfaces: Fundamental limits
  and models,'' \emph{{IEEE} J. Sel. Areas Commun.}, vol.~38, no.~11, pp.
  2526--2537, 2020.

\bibitem{tse_viswanath_2005}
D.~Tse and P.~Viswanath, \emph{Fundamentals of Wireless Communication}.\hskip
  1em plus 0.5em minus 0.4em\relax Cambridge University Press, 2005.

\bibitem{DBLP:journals/access/DecarliD21}
N.~Decarli and D.~Dardari, ``Communication modes with large intelligent
  surfaces in the near field,'' \emph{{IEEE} Access}, vol.~9, pp.
  165\,648--165\,666, 2021.

\bibitem{DBLP:journals/tsp/PizzoTSM22}
A.~Pizzo, A.~de~Jesus~Torres, L.~Sanguinetti, and T.~L. Marzetta, ``Nyquist
  sampling and degrees of freedom of electromagnetic fields,'' \emph{{IEEE}
  Trans. Signal Process.}, vol.~70, pp. 3935--3947, 2022.

\bibitem{https://doi.org/10.48550/arxiv.2210.08619}
\BIBentryALTinterwordspacing
M.~Di~Renzo, V.~Galdi, and G.~Castaldi, ``Modeling the mutual coupling of
  reconfigurable metaurfaces,'' 2022. [Online]. Available:
  \url{https://arxiv.org/abs/2210.08619}
\BIBentrySTDinterwordspacing

\bibitem{DBLP:journals/jsac/ZhangZ20a}
S.~Zhang and R.~Zhang, ``Capacity characterization for intelligent reflecting
  surface aided {MIMO} communication,'' \emph{{IEEE} J. Sel. Areas Commun.},
  vol.~38, no.~8, pp. 1823--1838, 2020.

\bibitem{DBLP:journals/twc/PerovicTRF21}
N.~S. Perovic, L.~Tran, M.~D. Renzo, and M.~F. Flanagan, ``Achievable rate
  optimization for {MIMO} systems with reconfigurable intelligent surfaces,''
  \emph{{IEEE} Trans. Wirel. Commun.}, vol.~20, no.~6, pp. 3865--3882, 2021.

\bibitem{DBLP:journals/wcl/AbrardoDRQ21}
A.~Abrardo, D.~Dardari, M.~D. Renzo, and X.~Qian, ``{MIMO} interference
  channels assisted by reconfigurable intelligent surfaces: Mutual coupling
  aware sum-rate optimization based on a mutual impedance channel model,''
  \emph{{IEEE} Wirel. Commun. Lett.}, vol.~10, no.~12, pp. 2624--2628, 2021.

\bibitem{DBLP:journals/tcom/HanZDZ22}
Y.~Han, S.~Zhang, L.~Duan, and R.~Zhang, ``Double-irs aided {MIMO}
  communication under los channels: Capacity maximization and scaling,''
  \emph{{IEEE} Trans. Commun.}, vol.~70, no.~4, pp. 2820--2837, 2022.

\bibitem{DBLP:journals/tcom/ZhouXLFN22}
W.~Zhou, J.~Xia, C.~Li, L.~Fan, and A.~Nallanathan, ``Joint precoder,
  reflection coefficients, and equalizer design for irs-assisted {MIMO}
  systems,'' \emph{{IEEE} Trans. Commun.}, vol.~70, no.~6, pp. 4146--4161,
  2022.

\bibitem{DBLP:journals/corr/abs-2104-12108}
\BIBentryALTinterwordspacing
N.~S. Perovic, L.~Tran, M.~D. Renzo, and M.~F. Flanagan, ``On the achievable
  sum-rate of the ris-aided {MIMO} broadcast channel,'' \emph{CoRR}, vol.
  abs/2104.12108, 2021. [Online]. Available:
  \url{https://arxiv.org/abs/2104.12108}
\BIBentrySTDinterwordspacing

\bibitem{DBLP:journals/twc/ShenCM22}
S.~Shen, B.~Clerckx, and R.~D. Murch, ``Modeling and architecture design of
  reconfigurable intelligent surfaces using scattering parameter network
  analysis,'' \emph{{IEEE} Trans. Wirel. Commun.}, vol.~21, no.~2, pp.
  1229--1243, 2022.

\bibitem{DBLP:journals/tvt/LiEHSMCH22}
Q.~Li, M.~El{-}Hajjar, I.~A. Hemadeh, A.~Shojaeifard, A.~Mourad, B.~Clerckx,
  and L.~Hanzo, ``Reconfigurable intelligent surfaces relying on non-diagonal
  phase shift matrices,'' \emph{{IEEE} Trans. Veh. Technol.}, vol.~71, no.~6,
  pp. 6367--6383, 2022.

\bibitem{TWC-BeyondDiagonal-2022}
H.~Li, S.~Shen, and B.~Clerckx, ``Beyond diagonal reconfigurable intelligent
  surfaces: From transmitting and reflecting modes to single-, group-, and
  fully-connected architectures,'' \emph{IEEE Transactions on Wireless
  Communications}, pp. 1--1, 2022.

\bibitem{doi:10.1126/sciadv.1602714}
A.~Díaz-Rubio, V.~S. Asadchy, A.~Elsakka, and S.~A. Tretyakov, ``From the
  generalized reflection law to the realization of perfect anomalous
  reflectors,'' \emph{Science Advances}, vol.~3, no.~8, p. e1602714, 2017.

\bibitem{DBLP:journals/pieee/RenzoDT22}
M.~D. Renzo, F.~H. Danufane, and S.~A. Tretyakov, ``Communication models for
  reconfigurable intelligent surfaces: From surface electromagnetics to
  wireless networks optimization,'' \emph{Proc. {IEEE}}, vol. 110, no.~9, pp.
  1164--1209, 2022.

\bibitem{DBLP:journals/ett/Telatar99}
E.~Telatar, ``Capacity of multi-antenna gaussian channels,'' \emph{Eur. Trans.
  Telecommun.}, vol.~10, no.~6, pp. 585--595, 1999.

\bibitem{DBLP:journals/wcl/GradoniR21}
G.~Gradoni and M.~D. Renzo, ``End-to-end mutual coupling aware communication
  model for reconfigurable intelligent surfaces: An electromagnetic-compliant
  approach based on mutual impedances,'' \emph{{IEEE} Wirel. Commun. Lett.},
  vol.~10, no.~5, pp. 938--942, 2021.

\bibitem{SAM-2006}
Z.~Fang, Y.~Hua, and J.~Koshy, ``Joint source and relay optimization for a
  non-regenerative mimo relay,'' in \emph{Fourth IEEE Workshop on Sensor Array
  and Multichannel Processing, 2006.}, 2006, pp. 239--243.

\bibitem{DBLP:journals/twc/TangH07}
X.~Tang and Y.~Hua, ``Optimal design of non-regenerative {MIMO} wireless
  relays,'' \emph{{IEEE} Trans. Wirel. Commun.}, vol.~6, no.~4, pp. 1398--1407,
  2007.

\bibitem{DBLP:journals/cm/ZhangZDTRDHPS22}
H.~Zhang, S.~Zeng, B.~Di, Y.~Tan, M.~D. Renzo, M.~Debbah, Z.~Han, H.~V. Poor,
  and L.~Song, ``Intelligent omni-surfaces for full-dimensional wireless
  communications: Principles, technology, and implementation,'' \emph{{IEEE}
  Commun. Mag.}, vol.~60, no.~2, pp. 39--45, 2022.

\bibitem{SlePol:J61a}
D.~Slepian and H.~O. Pollak, ``Prolate spheroidal wave functions, fourier
  analysis and uncertainty — i,'' \emph{Bell System Technical Journal},
  vol.~40, no.~1, pp. 43--63, 1961.

\bibitem{DBLP:journals/tcom/AbrardoDR21}
A.~Abrardo, D.~Dardari, and M.~D. Renzo, ``Intelligent reflecting surfaces:
  Sum-rate optimization based on statistical position information,''
  \emph{{IEEE} Trans. Commun.}, vol.~69, no.~10, pp. 7121--7136, 2021.

\bibitem{DBLP:journals/jstsp/PanZZHWPRRSZZ22}
C.~Pan, G.~Zhou, K.~Zhi, S.~Hong, T.~Wu, Y.~Pan, H.~Ren, M.~D. Renzo, A.~L.
  Swindlehurst, R.~Zhang, and Y.~A. Zhang, ``An overview of signal processing
  techniques for ris/irs-aided wireless systems,'' \emph{{IEEE} J. Sel. Top.
  Signal Process.}, vol.~16, no.~5, pp. 883--917, 2022.

\bibitem{DBLP:conf/eusipco/RoyV07}
O.~Roy and M.~Vetterli, ``The effective rank: {A} measure of effective
  dimensionality,'' in \emph{15th European Signal Processing Conference,
  {EUSIPCO} 2007, Poznan, Poland, September 3-7, 2007}.\hskip 1em plus 0.5em
  minus 0.4em\relax {IEEE}, 2007, pp. 606--610.

\end{thebibliography}

\end{document}